\pgfplotsset{compat=newest}
\definecolor{dgreen}{RGB}{61,152,56}
\newcommand{\new}{\textcolor{black}}
\newcommand{\cmark}{\ding{51}}%
\newcommand{\xmark}{\ding{55}}
\newcommand{\cX}{\mathcal{X}}
\newcommand{\cI}{\mathcal{X}_I}
\newcommand{\cG}{\mathcal{X}_G}
\newcommand{\cS}{\mathcal{X}_S}
\newcommand{\cU}{\mathcal{X}_U}
\newcommand{\cF}{\mathcal{X}_F}
\newcommand{\real}{\mathbb{R}}
\newcommand{\extreal}{\overline{\mathbb{R}}}
\DeclareMathOperator{\interior}{int}
\DeclareMathOperator{\Sphere}{Sphere}
\DeclareMathOperator{\Torus}{Torus}
\DeclareMathOperator{\Rectangle}{Rectangle}
\newtheorem{theorem}{\bfseries Theorem}
\newtheorem{corollary}[theorem]{\bfseries Corollary}%
\newtheorem{remark}[theorem]{\bfseries Remark}
\newtheorem{certificate}{\bfseries Certificate}
\newtheorem{example}{\bfseries Example}%
\newtheorem*{example*}{\bfseries Example}
\journal{Annual Reviews in Control}
\begin{document}

\begin{frontmatter}

\title{A General Framework for Verification and Control of Dynamical Models\\ via Certificate Synthesis}

\author[oxford]{Alec Edwards}

\affiliation[oxford]{organization={Department Of Computer Science},%
            city={Oxford},
            country={UK}}
\ead{alec.edwards@cs.ox.ac.uk}
\author[delft]{Andrea Peruffo}
\author[oxford]{Alessandro Abate}

\affiliation[delft]{organization={Center for Systems and Control},%
            city={Delft},
            country={The Netherlands}}

\begin{abstract}
    An emerging branch of control theory specialises in \emph{certificate learning}, 
    concerning the specification of a desired (possibly complex) system behaviour  
    for an autonomous or control model, 
    which is then analytically verified by means of a function-based proof. 
    However,  
    the synthesis of controllers abiding by these complex requirements is in general a non-trivial task and may elude the most expert control engineers. 
    This results in a need for automatic techniques that are able to design controllers and to analyse a wide range of elaborate specifications. 
    In this paper, we provide a general framework to encode system specifications and define corresponding certificates, and we present an automated approach to formally synthesise controllers and certificates. 
    Our approach contributes to the broad field of safe learning for control, exploiting the flexibility of neural networks to provide candidate control and certificate functions, whilst using SAT-modulo-theory (SMT)-solvers to offer a formal guarantee of correctness. 
    We test our framework by developing a prototype software tool, and
    assess its efficacy at verification via control and certificate synthesis over a large and varied suite of benchmarks. 
\end{abstract}

\begin{keyword}
  control systems \sep formal verification \sep machine learning \sep Lyapunov stability \sep safety \sep reachability 
\end{keyword}

\end{frontmatter}

\section{Introduction}

The analysis of the behaviour of continuous-time dynamical systems focuses on a wide range of properties, which are themselves suitable for an even wider range of applications. Properties of interest include arguably the most common property, (asymptotic) stability, namely the convergence of trajectories to an equilibrium \cite{sastry1999NonlinearSystems}; to safety, namely the avoidance of an unsafe region of the state space at all time \cite{blanchini2008SetTheoreticMethodsControl}; to its dual, reachability, that is the hitting of a target region of the state space in finite time  \cite{henzinger1996TheoryHybridAutomata}. 
As we shall see, with the combination and the slight modification of these basic properties (we shall alternatively denote them as specifications or requirements), an engineer may design a broad range of desired dynamical behaviours for any model at hand. 

Given a model of a dynamical system, 
such spectrum of properties can be investigated from different perspectives and with diverse approaches: either analytical (e.g., via local linearisation and eigenvalues computation) \cite{sastry1999NonlinearSystems}, or computational ones (e.g., via dynamic flow propagation or via reach-set computation). In general, non-linearity in dynamical models is difficult to deal with: on account of this,  approaches that are \emph{indirect} or \emph{sufficient} can be successful: a proof that the system actually fulfils a given requirement can be offered in the form of a \emph{certificate}: the onus is to find, or to synthesise, a real-valued function defined over the state space with proper characteristics.  
A celebrated instance of indirect methods is the synthesis of Lyapunov functions \cite{lyapunov1992GeneralProblemStability}, whereby one ought to hand-craft a bespoke energy function, oftentimes based on intuition and on physical properties of the underlying dynamical model.

In recent years numerical optimisation methods have automated the synthesis of certificates, employing \emph{templates}, i.e. candidate functions where only the coefficients (parameters) ought to be determined. 
Commonly, the choice falls onto polynomial templates framed as sum-of-squares convex problems  \cite{papachristodoulou2002ConstructionLyapunovFunctions, papachristodoulou2013SOSTOOLSVersion00, prajna2006BarrierCertificatesNonlinear, goubault2014FindingNonPolynomialPositive},  which admit globally optimal solutions. However, these techniques operate solely on models with polynomial dynamics and various convexity assumptions.
Alternative formulations include linear programs \cite{bensassi2016LinearRelaxationsPolynomial, sankaranarayanan2013LyapunovFunctionSynthesis, ratschan2010ProvidingBasinAttraction} and
semi-algebraic systems  \cite{she2009SemialgebraicApproachAsymptotic, she2013DiscoveringPolynomialLyapunov}, 
all of which raise structural requirements on the dynamical models at hand. 

Despite the usefulness of the mentioned synthesis approaches, 
they are numerically sensitive and generally unsound, and thus undesirable for robust solutions and safety-critical applications \cite{Aba:17,DBLP:conf/pldi/BohrerTMMP18,knightSafetyCriticalSystems2002}.  Consequently, in recent years interest has grown in approaches for synthesis that can yield \emph{provably-correct} certificates, much in the same line of research as \emph{correct-by-design} control synthesis \cite{tabuada09,belta2017formal}. A powerful technique to reason formally about correctness involves SMT-solving \cite{barrett2010smt}. 
SAT-modulo-theory (SMT) extends satisfiability (SAT) solving to richer theories, enabling, for example, finding feasible assignments of real numbered variables over nonlinear formulae.

SMT can be in particular leveraged for synthesis tasks. \emph{Inductive} approaches \cite{solar-lezama2006CombinatorialSketchingFinite}, leveraging SMT, have been used to synthesise certificates \cite{ravanbakhsh2015CounterexampleGuidedSynthesisReachWS, ravanbakhsh2019LearningControlLyapunov}, controllers \cite{huang2015controller, ABCCDKKP20} and abstractions \cite{abate2022neural} for dynamical models. Such techniques have been used first for stability certification of dynamical models using polynomial Lyapunov functions and later extended to more general reach-avoid requirements \cite{ahmed2018AutomatedSoundSynthesis, kapinski2014SimulationguidedLyapunovAnalysis, ravanbakhsh2015CounterExampleGuidedSynthesis, ravanbakhsh2015CounterexampleGuidedSynthesisReachWS}.
Related to SMT-based solutions, approaches that formulate synthesis problems as a mixed-integer linear programs have also been used to synthesise provably-correct Lyapunov functions \cite{dai2020CounterexampleGuidedSynthesis, dai2021LyapunovstableNeuralnetworkControl} for stability analysis, 
encompassing linear matrix inequalities for uncertain systems \cite{masti2023counter}, 
and barrier certificates for safety \cite{zhao2021LearningSafeNeural, chen2020LearningLyapunovFunctions, chen2021LearningLyapunovFunctions}.  
Notably, mixed-integer problems also encompass  candidates in the form of neural networks with ReLU activation functions, 
and may employ an optimisation engine like Gurobi \cite{gurobi} to certify the soundness of the proposed functions \cite{zhao2021SynthesizingReLUNeural}.

\textbf{Related work} 
The flexibility of neural networks has permeated the field of certificate synthesis, including their use for synthesis of Lyapunov-like functions. 
For instance,
\cite{richards2018LyapunovNeuralNetwork,noroozi2008GenerationLyapunovFunctions, jin2020NeuralCertificatesSafe} describe generally unsound procedures for
gradient descent-based training of a Lyapunov neural network.. 
Sound, 
counter-example based techniques are proposed in e.g. \cite{ahmed2018AutomatedSoundSynthesis, chang2020NeuralLyapunovControl, abate2020AutomatedFormalSynthesis, samanipour2023StabilityAnalysisController, grande2023augmented, grande2023passsive}, specifically for Lyapunov functions, and solely for barrier certificates in e.g. \cite{peruffo2021AutomatedFormalSynthesis, zhao2020SynthesizingBarrierCertificates, ratschan2018SimulationBasedComputation}.
The choice of SMT solver depends on the models under consideration and desired certificate template: Z3 \cite{demoura2008Z3EfficientSMT} handles polynomial functions, dReal \cite{gao2013dRealSMTSolver}, iSat3 \cite{isat3} and CVC5 \cite{DBLP:conf/tacas/BarbosaBBKLMMMN22} enable analysis of non-polynomial functions as well as polynomials.   %
The synthesis of certificates includes more complex properties, as proposed in \cite{verdier2020FormalControllerSynthesis, verdier2020FormalSynthesisAnalytic} where bespoke genetic algorithms are leveraged to generate \emph{reach-while-stay} and \emph{reach-and-stay-while-stay} functions, alongside controllers, for hybrid systems. Certificates for reach-avoid properties have been previously synthesised using counterexample-based approaches \cite{ravanbakhsh2015CounterexampleGuidedSynthesisReachWS}. Meanwhile reach-avoid-stay properties and corresponding certificates have been well studied from a theoretical perspective \cite{meng2021ControlNonlinearSystems, meng2021SmoothConverseLyapunovBarrier}. In this work, we collate certificates for these more complex properties, and categorise them in order to unify them within simpler certificates for stability, reachability and safety. We also generalise the concept of reach-avoid-stay property by separating allowing the \emph{stay} set to be different from the \emph{reach} set.
The interested reader may find a survey on neural certificates with application in control synthesis and robotics in \cite{dawson2022safe}. 

\textbf{Contributions} 
We summarise our contributions as follows:
\begin{itemize}
    \item We collate and add to existing certificates across literature for nonlinear continuous-time dynamical models.
    \item We categorise the properties these certify into a simplified and general framework, which we newly describe through notions \emph{arrive, avoid} and \emph{remain}. 
    \item We describe a unified algorithm to concurrently synthesise \emph{both} controllers and certificates in parallel for dynamical models, to prove they satisfy these properties.
    \item We implement our framework\footnote{Avaliable at \href{https://github.com/oxford-oxcav/fossil}{https://github.com/oxford-oxcav/fossil}} on top of the computational library \textsf{Fossil} \cite{abate2021FOSSILSoftwareTool, fossil2}, offering a new prototype software tool that can verify general \emph{arrive, avoid} and \emph{remain} properties for nonlinear control models using controllers and certificates based on neural networks.
\end{itemize}

\textbf{Organisation}
This manuscript is organised as follows. Section \ref{sec:prelim} provides relevant background information and notation used across this work. In Section \ref{sec:properties}, we describe a range of properties for dynamical models and certificates which prove that they hold. Next, we describe a unified and computationally correct algorithm for synthesising these certificates in Section \ref{sec:synthesis}. We present experimental results for a prototype tool to synthesise these certificates in Section \ref{sec:experiments}, before discussing the limitations of our framework in Section \ref{sec:lim} and providing concluding remarks in Section \ref{sec:conclusion}. Finally, we outline the proofs of all theorems in \ref{app:proofs}, and 
we reserve a discussion on the broader taxonomic connections of our work to \ref{sec:connections}.

\section{Preliminaries}
\label{sec:prelim}

\subsection{Dynamical Models}

We denote the set of positive real numbers and its extended version 
as  $\real_+$ and $\extreal = \real_+ \cup \{+\infty\}$, respectively. 
A function is said to be of class $\mathcal{C}^1$ if its first derivative exists and is continuous.
Let us consider models described by 
\begin{equation}
    \label{eq:con-model}
    \dot{\xi}(t) = 
    f_u(\xi(t), u(t)), 
    \quad 
    x(t_0) = x_0 \in \cI \subseteq \cX
\end{equation}
where $x=\xi(t) \in \cX \subseteq \mathbb{R}^n$ is the state of the system,
$f_u : \cX \times \mathcal{U} \rightarrow \mathbb{R}^n$ is a Lipschitz-continuous vector field describing the model dynamics. We refer to these models as control models.
We denote a trajectory over a time horizon $T \in \extreal$ 
as $\xi(t): [t_0, T] \to \real^n$, where $\xi(t)$ admits a time derivative everywhere, and such that $\dot{\xi}(t) = f_u(\xi(t), u(t))$ and $\xi(t_0) \in \cX$, namely the trajectory is a solution of the model in \eqref{eq:con-model}.  
Finally, $u(t) \in \mathcal{U} \subseteq \mathbb{R}^m$ is the input and $\cX_I$ denotes the set of initial conditions. 
Once a state-feedback controller $u(t) = k(\xi(t))$ has been specified, we may interpret the dynamics described by \eqref{eq:con-model} as those of a closed-loop model, as follows 
\begin{equation}
    \label{eq:dyn-model}
    \dot{\xi}(t) = f(\xi(t)), 
    \quad 
    \xi(t_0) = \xi_0 \in \cI \subseteq \cX.
\end{equation}
We refer to models of this kind simply as autonomous dynamical models. 
We denote with $x^*$ an equilibrium point of \eqref{eq:dyn-model}, namely where $f(x^*) = 0$.

\subsection{Systems and Properties}

The goal of this work is to find a feedback controller, namely a signal $u(t)$ in time, such that 
the dynamics above satisfy some desired temporal requirements (e.g., safety), or to show that given closed-loop (autonomous) dynamics are endowed with some given property (e.g., asymptotic stability).

We interpret properties of dynamical models in \eqref{eq:dyn-model} in terms of their trajectories, and of binary relations that these trajectories have with given sets within the state space $\mathcal X$.  
To this end, we now introduce the notations and the semantics of the sets characterising properties of dynamical models.  
We denote as $\cX_U$ an unsafe set, indicating a region of the state space where the system's trajectories should avoid; 
$\cX_G$ represents a goal set, indicating the region that the system's trajectories should enter;
and $\cX_F$ represents a final set, indicating a set where the system's trajectories should remain for all times after arriving at the goal set. 
These sets will be employed in the next section for the definition of properties, as depicted in Fig.~\ref{fig:properties-example}. 
Implicitly, we consider that the unsafe and final sets are disjoint, i.e. $\cU \cap \cF = \emptyset$ and that the goal set is contained within the final set, i.e. $\cG \subset \cF$. \new{ Often we will assume these sets to be compact; the
motivation for assuming sets to be compact (rather than just closed) is mainly to
ease automated synthesis and verification.}
Additional topological properties of sets will be clarified later, within formal statements. 
Given a set $S$ in a domain $\cX$, we denote by ${S}^\complement$ its complement, i.e. $\cX \setminus S$, and by $\interior(S)$ its interior, namely the set without its border, i.e. $\interior(S) = S \setminus \partial S$. 

We consider a set $S$ to be (forward) \emph{invariant} if at some initial time $t_0$, $\xi(t_0) \in S$ implies that for all $t > t_0$, $\xi(t) \in S$ \cite{blanchini2008SetTheoreticMethodsControl}. We consider a set $S_A$ to be attracting~\cite{blanchini2008SetTheoreticMethodsControl} with \emph{region of attraction} $S_B$ if for some initial time $t_0$ and any initial state $\xi(t_0) \in S_B$, the trajectory $\xi(t)$ converges to $S_A$ as $t \to \infty$, i.e. if $\lim_{t\to \infty} dist(\xi(t), S_A) = 0$. \new{Finally, given a function $C: \real^n \to \real$, we denote the Lie derivative of this function with respect to the vector field $f$ simply as $\dot{C} = \langle \nabla C, f(x) \rangle$.}

\subsection{Neural Networks}
Denote a neural network $\mathcal{N}$  with input layer $z_0 \in \real^n$, corresponding to the dimension of the dynamical model in \eqref{eq:dyn-model}. This is followed by $k$ hidden layers $z_1, \dots, z_k$ with dimensions $h_1, \dots, h_k$ respectively,
and finally followed by an output layer $z_{k+1} \in \real^d$. In this work, $d \in \{1, m\}$, where $d=1$ corresponds to a scalar-valued certificate, whereas $d=m$ is used for a state-feedback controller with $m$ control variables.

\new{We denote the hidden layers and output layer as $h_i$ with index $i=0, \ldots, k+1$, where $k+1$ denotes the output layer}; to these layers are associated matrices of weights $W_i \in \real^{h_i \times h_{i-1}}$ and a vector of biases $b_i \in \real^{h_i}$~\cite{mackay2003InformationTheoryInference}. 
Every $i$-th hidden layer is associated with an activation function $\sigma_i \colon \mathbb{R} \to \mathbb{R}$.
The valuation of output and hidden layers is given by 
\begin{align}
\label{eq:layer}
  z_i &= \sigma_i(W_i \cdot  z_{i-1} + b_i), \ i = 1, \dots, k, \\
  z_{k+1} &= W_{k+1} \cdot z_k + b_{k+1},  
\end{align}
where each $\sigma_i$ is \new{applied element-wise to its
$h_i$-dimensional argument}.  

\subsection{SAT-Modulo Theory (SMT)}

Here, we offer a brief introduction to SMT solving to aid understanding of how they are used in this work; for a more comprehensive introduction the reader is directed to \cite{BT18}. 

Let $\phi(x)$ be a formula in first-order logic, expressed in terms of variables $x$ in a given domain. SMT is the problem of determining if there exists an assignment of the variables $x$ such that $\phi(x)$ is satisfied (i.e., if it evaluates to true). SMT extends the Boolean satisfiability (SAT) problem: SAT solvers use modern forms of the DPLL algorithm \cite{DBLP:journals/jacm/DavisP60, DBLP:journals/cacm/DavisLL62} to search for satisfying assignments for boolean formulae. As an example, consider the following boolean formula, 
$\phi_1(x) = (x_0 \vee x_1) \wedge (\neg x_0 \vee \neg x_2), x = (x_0,x_1,x_2), x_i \in \{T, F\}$. When provided with this formula, a SAT solver would return an assignment such as $x_0=\text{T}, x_1=\text{T}$, $x_2=\text{F}$, which satisfies $\phi_1(x)$. 

Now let us consider an example that is more relevant to this work.  Let $\sigma_t$ be the hyperbolic tangent function, and consider the following function $g: \real \to \real$
\begin{equation*}
    g(x) = \sigma_t(2 - 2x) - 3\sigma_t(1-x) + \frac{3}{2}.
\end{equation*}
Suppose we wish to find if $g$ is negative \new{at some point} where $x$ is non-negative. 
Formally, this is 
\begin{equation*}
    \exists x \in \real, x \geq 0 : g(x) < 0.
\end{equation*}
An SMT solver over the theory of real algebra, which can in particular handle transcendental functions, can handle this problem. This solver will return $\textit{unsat}$ to this problem, as no satisfying assignment exists - it is unsatisfiable. Crucially, SMT-solvers are \emph{sound}: if a satisfying assignment exists then it cannot return that the problem is unsatisfiable. 

We note that $g(x)$ is in fact a trivial feed-forward neural network and we have proven that it is positive in a region of the state-space. Later,  we will use the same ideas to prove similar properties for non-trivial networks.

\section{Properties and Certificates}
\label{sec:properties} 

We present a number of properties (or requirements) for dynamical models defined over their trajectories, alongside definitions of corresponding certificates, whose existence serve as \emph{sufficient conditions} for the satisfaction of the desired properties. 
We focus on continuous-time models; while the presented properties may be seamlessly applied to discrete time models, the corresponding certificates would not necessarily align with the presentation of the work, hence we omit their discussion as outside the scope of this work.
The proofs of the theorems relating certificates to dynamical properties are outlined in \ref{app:proofs}.
Several properties (and corresponding certificates) are ubiquitous across the control theory literature, whilst others have been more recently introduced or inherited from analogues in formal verification. Further, we include a practical variant on Lyapunov functions to allow for constrained local stability and an original certificate called Reach-Avoid-Remain, and suggest that more can be obtained in a modular, composable fashion. 
We emphasise that while many of these properties are similar to each other, they are subtly, and importantly, different and distinguished. Later in the section we provide a summary and clarification on their distinguishing features, which may also be seen in Fig.~\ref{fig:properties-example}.

\subsection{Stability}

Stability is the most-studied property of dynamical models, and many definitions of this property exist. Stability is most commonly characterised in a Lyapunov (asymptotic) sense \cite{sastry1999NonlinearSystems}, namely in terms of the distance of a trajectory from the equilibrium point.  Conversely, in order to achieve a consistent characterisation of properties in terms of set containment, here we characterise stability as follows: 
\begin{equation}
\label{eq:stab-spec}
        \exists \cI: \forall \xi(t_0) \in \cI, \exists T \in \extreal, \forall \tau \geq T, \ \xi(\tau) \in \{x^*\},
\end{equation}
where $\cI$ has non empty interior.
In words, there exists some initial set $\cI$ such that for all trajectories initialised in $\cI$, there exists a time instant $T$ (possibly at infinity) when the trajectory reaches the equilibrium state $x^*$ and remains there for all times after $T$. A model can be proven to satisfy this property using a Lyapunov function, which we introduce next.

\begin{certificate}[Lyapunov Function]
    Given a model $f$ with unique equilibrium point $x^* \in \interior(\cX)$, consider a function $V: \cX \subset \real^n \to \real, V \in \mathcal{C}^1$. $V$ is a Lyapunov function if:
    \begin{subequations}
    \label{eq:lyap}
    \begin{align}
        &V(x^*) = 0,  \\
        &V(x) > 0 \quad \forall x \in \cX \setminus \{x^*\}, \\
        &\dot{V}(x) = \langle \nabla V(x), f(x) \rangle < 0 \quad \forall x \in \cX \setminus \{x^*\}.
    \end{align}
    \end{subequations}
\end{certificate}
\medskip

\begin{theorem}[Stability]
\label{thm:lyap}
     Given a model \eqref{eq:dyn-model}, if a Lyapunov function exists, then \eqref{eq:stab-spec} holds for some set of initial conditions $\cI$. \hfill$\blacksquare$
\end{theorem}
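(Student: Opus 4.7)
The proof plan is to follow the classical Lyapunov argument, tailored to the set-based specification in \eqref{eq:stab-spec}. The interpretation I adopt is that the case $T=+\infty$ in the specification corresponds to asymptotic convergence $\xi(t) \to x^*$ as $t \to \infty$, which matches what a strict Lyapunov function delivers.

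First, I would construct the set of initial conditions $\cI$ explicitly as a compact sublevel set of $V$. Pick $c > 0$ small enough that $\cI := \{x \in \cX : V(x) \leq c\}$ is compact and contained in $\interior(\cX)$; this is possible because $V$ is continuous, positive definite with $V(x^*)=0$, and $x^* \in \interior(\cX)$. This $\cI$ has non-empty interior (it contains a neighbourhood of $x^*$), so it satisfies the quantifier in \eqref{eq:stab-spec}.

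Next, I would establish forward invariance of $\cI$. Using the chain rule together with $\dot{V}(x) = \langle \nabla V(x), f(x)\rangle < 0$ on $\cX \setminus \{x^*\}$, the map $t \mapsto V(\xi(t))$ is strictly decreasing on any subinterval where $\xi(t) \neq x^*$. Hence $V(\xi(t)) \leq V(\xi(t_0)) \leq c$ for all $t \geq t_0$, so $\xi(t) \in \cI$ for all $t \geq t_0$. Compactness of $\cI$ rules out finite-time blow-up and guarantees that trajectories are defined for all $t \geq t_0$.

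Then I would obtain convergence to the equilibrium. Since $V \circ \xi$ is monotonically non-increasing and bounded below by $0$, it converges to some $V^\star \geq 0$. I would argue $V^\star = 0$ by contradiction: if $V^\star > 0$, then by continuity of $V$ the trajectory remains in the compact annular set $\{x \in \cI : V^\star \leq V(x) \leq V(\xi(t_0))\}$, which is bounded away from $x^*$. On this compact set $\dot{V}$ is continuous and strictly negative, so it attains a maximum $-\alpha < 0$; integrating yields $V(\xi(t)) \leq V(\xi(t_0)) - \alpha (t-t_0)$, contradicting $V \geq 0$ for large $t$. Therefore $V(\xi(t)) \to 0$, and by the positive definiteness of $V$ together with its continuity on the compact set $\cI$, this forces $\xi(t) \to x^*$. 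This matches the case $T = +\infty$ of \eqref{eq:stab-spec}, completing the argument.

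The main obstacle I anticipate is the last step: passing from $V(\xi(t)) \to 0$ to $\xi(t) \to x^*$. This requires a uniform argument, since a priori $V$ being small does not immediately localise $x$ near $x^*$ without exploiting compactness of $\cI$ and the fact that $V^{-1}(0) \cap \cI = \{x^*\}$. A clean way to handle this is via the standard $\varepsilon$-$\delta$ argument using $\inf_{\|x-x^*\|\geq\varepsilon, x\in\cI} V(x) > 0$, which is what I would deploy to close the proof rigorously.
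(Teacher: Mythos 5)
Your proof is correct and takes essentially the same route as the paper's: both construct the initial set $\cI$ as a compact sublevel set of $V$ contained in $\cX$, establish its forward invariance, and then conclude that trajectories converge to $x^*$. The difference is only one of level of detail — the paper defers the monotone-decrease and convergence argument to standard Lyapunov theory (citing Sastry), whereas you supply the full self-contained argument (boundedness of $V\circ\xi$, the contradiction via $\dot V\le -\alpha$ on the compact annulus, and the $\varepsilon$-$\delta$ step converting $V(\xi(t))\to 0$ into $\xi(t)\to x^*$). Your more explicit treatment is, if anything, a slight improvement: the paper's one-line appeal to ``asymptotic stability'' glosses over the fact that one must verify $\Omega_\beta$ lies inside the basin of attraction, which your invariance-plus-LaSalle-type argument settles directly.
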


\smallskip
Let us clarify the issue of finding $\cI$ in the next section.

\subsection{Region of Attraction} 
    Thm.~\ref{thm:lyap} proves the existence of some region of the state space in which initialised trajectories will converge asymptotically towards the origin -- this region is known as a \emph{region of attraction} (ROA). 
    In general, Lyapunov functions merely prove the existence of a region of attraction within the state space, without additional information about its size or shape. 
    Lyapunov functions may prove global asymptotic stability under additional conditions, but these can be difficult to synthesise and verify automatically for some models. Here, we offer a certificate the acts as a middle ground 
    between local and global asymptotic stability.
    
    Proving that all trajectories initialised within a given $\cI$ are stable amounts to proving that $\cI$ is contained wholly within a sub-level set of a Lyapunov function (which must also lie in $\cX$), and that the Lyapunov conditions in \eqref{eq:lyap} hold over this entire sub-level set.
    This is treated in the following equation and corollary.

First, we modify \eqref{eq:stab-spec} to now require a specified set of initial states $\cI$, which should be a region of attraction for an equilibrium point, as follows:
\begin{equation}
\label{eq:roa-spec}
        \forall \xi(t_0) \in \cI, \exists T \in \extreal, \forall \tau \geq T, \ \xi(\tau) \in \{x^*\}. 
\end{equation}

We can certify that an autonomous model satisfies this property using the following certificate.

\begin{certificate}[ROA Certificate]\label{cert:roa}
    Let a dynamical model $f$ be given with unique equilibrium point $x^* \in \cX$. A Lyapunov function $V$ is an ROA certificate if there exists a $\beta$ such that $\cI \subset \{x \in \cX : V(x) \leq \beta \ \}$ and that the conditions in \eqref{eq:lyap} hold over the set $\{x \in \cX : V(x) \leq \beta  \}$. \hfill$\blacksquare$
\end{certificate}

Alternatively, a Lyapunov function can be interpreted as a proof that a region of attraction (here $\cI$) exists within some larger set (here $\cX$), whereas a ROA certificate proves the converse: that a given initial set $\cI$ lies within a larger region of attraction, which is defined by a sublevel set of $V$. 
This certificate offers a more practical guarantee over classical Lyapunov functions: all trajectories initialised within the pre-defined set $\cI$ indeed converge to $x^*$. 

\begin{corollary}[Region of Attraction]
\label{cor:roa}
    Given a model \eqref{eq:dyn-model}, a bounded set of initial conditions $\cI$, and a ROA certificate, then \eqref{eq:roa-spec} holds. \hfill$\blacksquare$
\end{corollary}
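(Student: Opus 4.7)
The plan is to deduce the corollary from Theorem~\ref{thm:lyap} by restricting the domain of the Lyapunov argument to the sublevel set picked out by the ROA certificate. Write $\Omega_\beta := \{x \in \cX : V(x) \le \beta\}$. By hypothesis, $\cI \subset \Omega_\beta$ and the Lyapunov conditions \eqref{eq:lyap} hold on $\Omega_\beta$.

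The first step is to establish forward invariance of $\Omega_\beta$. Fix any $\xi(t_0) \in \cI \subset \Omega_\beta$ and suppose for contradiction that the trajectory leaves $\Omega_\beta$ at some earliest time $t_1 > t_0$. By continuity of $V \circ \xi$, we must have $V(\xi(t_1)) = \beta$ and hence $\xi(t_1) \in \Omega_\beta \setminus \{x^*\}$, since $V(x^*)=0 < \beta$ (assuming $\beta>0$, which follows from $\cI$ having nonempty interior and the positive definiteness of $V$). But then $\dot V(\xi(t_1)) < 0$ by the ROA-certificate condition, which contradicts the definition of $t_1$ as the first exit time: in a left-neighbourhood of $t_1$ the value $V(\xi(\cdot))$ must already exceed $\beta$, forcing an earlier exit. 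Hence no such $t_1$ exists and $\xi(t) \in \Omega_\beta$ for all $t \ge t_0$.

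The second step is to invoke the stability conclusion locally. On the (bounded) invariant set $\Omega_\beta$, the triple of conditions \eqref{eq:lyap} is exactly the hypothesis of the Lyapunov function used in Theorem~\ref{thm:lyap}, applied to the restricted state space $\Omega_\beta$. Therefore Theorem~\ref{thm:lyap} yields that every trajectory starting in $\Omega_\beta$ satisfies $\xi(\tau) \to x^*$; i.e.\ for every $\xi(t_0) \in \Omega_\beta$ there exists $T \in \extreal$ such that $\xi(\tau) \in \{x^*\}$ for all $\tau \ge T$. Restricting this conclusion to initial conditions drawn from $\cI \subset \Omega_\beta$ gives exactly \eqref{eq:roa-spec}.

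I expect the only delicate point to be the forward-invariance step: one must rule out trajectories that touch $\partial\Omega_\beta$ tangentially or re-enter, which is why the argument is phrased via the earliest exit time and the strict inequality $\dot V < 0$ on $\Omega_\beta \setminus \{x^*\}$. Boundedness of $\cI$ is used only to ensure $\beta$ can be chosen finite so that $\Omega_\beta$ is compact (hence invariance plus Lyapunov decrease immediately yields convergence, avoiding escape-to-infinity pathologies). Everything else is a direct reduction to Theorem~\ref{thm:lyap}.
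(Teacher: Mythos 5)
Your proof is correct and follows essentially the same route as the paper's: define the sublevel set $\Omega_\beta$, note $\cI \subset \Omega_\beta$ with the Lyapunov conditions holding there, and reduce to Theorem~\ref{thm:lyap}. The only difference is that you make explicit the forward-invariance argument via the first exit time, which the paper leaves implicit (it is subsumed in the appeal to standard Lyapunov theory inside the proof of Theorem~\ref{thm:lyap}).
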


\begin{figure*}[htbp]
    \centering
    \resizebox{0.7\textwidth}{!}{
         \subfloat[][Stability]{\begin{tikzpicture}[scale=0.45, inner sep=2pt, minimum size=0.4cm,node
    distance=.5cm,use Hobby shortcut,
    vecstab/.style={
    decorate,
    decoration={
      markings,      %
      mark=at position 0.4 with {\arrow[rotate=5, line width=2pt]{stealth[scale=10]}},
    }
  }
]

    \begin{scope}[xshift=0cm, yshift=0cm]
    \draw[fill opacity=0.35, fill=green!60!blue, text opacity=1] (0,0) circle (0.1cm) node[anchor=west, xshift=-0.2cm, yshift=0.2cm]{\tiny $x^*$};
    \end{scope}

    \begin{scope}[yshift=0.1cm]
    \draw[pattern={Lines[angle=45, distance=4pt, line width=1pt]}, pattern color=blue, fill opacity=0.5, text opacity=1]   plot[smooth, tension=.5] coordinates {(-1.7,0) (-1.9, -1.9) (0,-1) (1,0) (1, 1) (0,1) (-0.75, 0.75) (-1.55, 0.15) (-1.7, 0)} node[anchor=west, yshift=0.1cm, xshift=-0.5cm] {\tiny $\cX_I$};
    \end{scope}

    \draw[very thick, decoration={markings,
    mark=at position 0.5 with \arrow{stealth}},postaction=decorate] (-1, 0) .. (0.8, 0.3) .. (-0.5, -0.6) .. (0, 0);

\end{tikzpicture} \label{fig:stab}}
         \subfloat[][ROA]{\begin{tikzpicture}[scale=0.45, inner sep=2pt, minimum size=0.4cm,node
    distance=.5cm,use Hobby shortcut,
    vecstab/.style={
    decorate,
    decoration={
      markings,      %
      mark=at position 0.4 with {\arrow[rotate=5, line width=2pt]{stealth[scale=10]}},
    }
  }
]

    \begin{scope}[xshift=0.1cm, yshift=-1cm, scale=0.6, rotate=90]
    \draw[fill opacity=0.3, fill=blue, text opacity=1]   plot[smooth, tension=.5] coordinates {(-3.5,0.5) (-3,2.5) (-1,3.5) (1.5,3) (4,3.5) (5,2.5) (5,0.5) (2.5,-2) (0,-0.5) (-3,-2) (-3.5,0.5)} node[anchor=west, yshift=0.1cm, xshift=-1cm] {\tiny $\cX_I$};
    \end{scope}
    
    \begin{scope}[xshift=0cm, yshift=0cm]
    \draw[fill opacity=0.35, fill=green!60!blue, text opacity=1] (0,0) circle (0.1cm) node[anchor=west, xshift=-0.2cm, yshift=0.2cm]{\tiny $x^*$};
    \end{scope}

    \draw[very thick, decoration={markings,
    mark=at position 0.5 with \arrow{stealth}},postaction=decorate] (-1, 0) .. (0.8, 0.3) .. (-0.5, -0.6) .. (0, 0);

\end{tikzpicture} \label{fig:roa}}
         \subfloat[][Safety]{\begin{tikzpicture}[scale=0.45, inner sep=2pt, minimum size=0.4cm,node
    distance=.5cm,
    use Hobby shortcut,
    vec1/.style={
    decorate,
    decoration={
      markings,      %
      mark=at position 0.4 with {\arrow[rotate=5, line width=2pt]{stealth[scale=10]}},
    }
  }
]

    \begin{scope}[xshift=3cm, yshift=4cm, scale=0.3]
    \draw[fill opacity=0.3, fill=blue, text opacity=1]   plot[smooth, tension=.5] coordinates {(-3.5,0.5) (-3,2.5) (-1,3.5) (1.5,3) (4,3.5) (5,2.5) (5,0.5) (2.5,-2) (0,-0.5) (-3,-2) (-3.5,0.5)} node[anchor=east, yshift=0cm, xshift=0cm] {\tiny $\cX_I$};
    \end{scope}

    \begin{scope}[scale=0.3, xshift=8cm, yshift=2.5cm]
    \draw[fill opacity=0.35, fill=orange, text opacity=1] plot[smooth, tension=.5] coordinates {(-3.5,0.5) (-1,1) (-1,2) (1.5,3) (4,3.5) (5,2.5) (5,0.5) (2.5,-2) (0,0) (-3,-2) (-3.5,0.5)} node[anchor=west, xshift=0cm, yshift=-0.5cm] {\tiny $\cX_U$};
    \end{scope}

    \draw[very thick, decoration={markings,
    mark=at position 0.5 with \arrow{stealth}},postaction=decorate] (3,4) .. (1, 2) .. (0.8, 0.3) .. (-0.5, -0.6) .. (0, 0);

    \draw [very thick, decoration={markings,
    mark=at position 0.5 with \arrow{stealth}},postaction=decorate] (2.3, 4.8) .. (0.2, 4) .. (-0.5, 0.7) .. (-2, -1.5);

    \end{tikzpicture} \label{fig:safe}}
         \subfloat[][SWA]{\begin{tikzpicture}[scale=0.45, inner sep=2pt, minimum size=0.4cm,node
    distance=.5cm,
    use Hobby shortcut,
    vec1/.style={
    decorate,
    decoration={
      markings,      %
      mark=at position 0.4 with {\arrow[rotate=5, line width=2pt]{stealth[scale=10]}},
    }
  }
]

    \begin{scope}[xshift=3cm, yshift=4cm, scale=0.3]
    \draw[fill opacity=0.3, fill=blue, text opacity=1]   plot[smooth, tension=.5] coordinates {(-3.5,0.5) (-3,2.5) (-1,3.5) (1.5,3) (4,3.5) (5,2.5) (5,0.5) (2.5,-2) (0,-0.5) (-3,-2) (-3.5,0.5)} node[anchor=east, yshift=0cm, xshift=0cm] {\tiny $\cX_I$};
    \end{scope}

    \begin{scope}[scale=0.3, xshift=8cm, yshift=2.5cm]
    \draw[fill opacity=0.35, fill=orange, text opacity=1] plot[smooth, tension=.5] coordinates {(-3.5,0.5) (-1,1) (-1,2) (1.5,3) (4,3.5) (5,2.5) (5,0.5) (2.5,-2) (0,0) (-3,-2) (-3.5,0.5)} node[anchor=west, xshift=0cm, yshift=-0.5cm] {\tiny $\cX_U$};
    \end{scope}

    \draw[very thick, decoration={markings,
    mark=at position 0.5 with \arrow{stealth}},postaction=decorate] (3,4) .. (1, 2) .. (0.8, 0.3) .. (-0.5, -0.6) .. (0, 0);

    \draw[fill opacity=0.35, fill=green!60!blue, text opacity=1] (0,0) circle (0.1cm) node[anchor=west, xshift=-0.1cm, yshift=0.2cm]{\tiny $x^*$};
    
    \draw [very thick, decoration={markings,
    mark=at position 0.5 with \arrow{stealth}},postaction=decorate] (2.3, 4.8) .. (0.2, 4) .. (-0.5, 0.7) .. (0, 0);

    \end{tikzpicture} \label{fig:swa}}}
         \\
    \resizebox{0.7\textwidth}{!}{
         \subfloat[][RWA]{\begin{tikzpicture}[scale=0.45, inner sep=2pt, minimum size=0.4cm,node
    distance=.5cm,
    use Hobby shortcut,
    vec1/.style={
    decorate,
    decoration={
      markings,      %
      mark=at position 0.4 with {\arrow[rotate=5, line width=2pt]{stealth[scale=10]}},
    }
  }
]

    \begin{scope}[xshift=3cm, yshift=4cm, scale=0.3]
    \draw[fill opacity=0.3, fill=blue, text opacity=1]   plot[smooth, tension=.5] coordinates {(-3.5,0.5) (-3,2.5) (-1,3.5) (1.5,3) (4,3.5) (5,2.5) (5,0.5) (2.5,-2) (0,-0.5) (-3,-2) (-3.5,0.5)} node[anchor=east, yshift=0cm, xshift=0cm] {\tiny $\cI$};
    \end{scope}
    
    \begin{scope}[xshift=0cm, yshift=0cm]
    \draw[fill opacity=0.35, fill=green!60!blue, text opacity=1] (0,0) circle (1.1cm) node[anchor=east, yshift=0.4cm, xshift=-0.4cm]{\tiny $\cG$};
    \end{scope}

    \begin{scope}[scale=0.3, xshift=8cm, yshift=2.5cm]
    \draw[fill opacity=0.35, fill=orange, text opacity=1] plot[smooth, tension=.5] coordinates {(-3.5,0.5) (-1,1) (-1,2) (1.5,3) (4,3.5) (5,2.5) (5,0.5) (2.5,-2) (0,0) (-3,-2) (-3.5,0.5)} node[anchor=west, xshift=0cm, yshift=-0.5cm] {\tiny $\cU$};
    \end{scope}

    \draw[very thick, decoration={markings,
    mark=at position 0.5 with \arrow{stealth}},postaction=decorate] (3,4) .. (1, 2) .. (0.8, 0.3) .. (-0.5, -0.6) .. (0.3, 0.5);

    \draw [very thick, decoration={markings,
    mark=at position 0.5 with \arrow{stealth}},postaction=decorate] (2.3, 4.8) .. (0.2, 4) .. (-0.92, 0.7) .. (-2, -1.5);

    \end{tikzpicture} \label{fig:rwa}}
         \subfloat[][RSWA]{\begin{tikzpicture}[scale=0.45, inner sep=2pt, minimum size=0.4cm,node
    distance=.5cm,
    use Hobby shortcut,
    vec1/.style={
    decorate,
    decoration={
      markings,      %
      mark=at position 0.4 with {\arrow[rotate=5, line width=2pt]{stealth[scale=10]}},
    }
  }
]

    \begin{scope}[xshift=3cm, yshift=4cm, scale=0.3]
    \draw[fill opacity=0.3, fill=blue, text opacity=1]   plot[smooth, tension=.5] coordinates {(-3.5,0.5) (-3,2.5) (-1,3.5) (1.5,3) (4,3.5) (5,2.5) (5,0.5) (2.5,-2) (0,-0.5) (-3,-2) (-3.5,0.5)} node[anchor=east, yshift=0cm, xshift=0cm] {\tiny $\cX_I$};
    \end{scope}
    
    \begin{scope}[xshift=0cm, yshift=0cm]
    \draw[fill opacity=0.35, fill=green, text opacity=1] (0,0) circle (1.2cm) node[anchor=west, xshift=-0.7cm, yshift=0.62cm]{\tiny $\mathcal{X}_F$};
    \end{scope}

   \begin{scope}
    \draw[fill, color=white] (0,0) circle (0.7cm) {};
    \draw[pattern={Lines[angle=45, distance=2pt, line width=1pt]}, pattern color=green!60!blue, text opacity=1] (0,0) circle (0.7cm) node[anchor=west, xshift=0.2cm, yshift=-0.8cm]{\tiny $\mathcal{X}_G$};
    \end{scope}
   \draw[] (0.4, -0.4) -- (0.8, -1.5);

    \begin{scope}[scale=0.3, xshift=8cm, yshift=2.5cm]
    \draw[fill opacity=0.35, fill=orange, text opacity=1] plot[smooth, tension=.5] coordinates {(-3.5,0.5) (-1,1) (-1,2) (1.5,3) (4,3.5) (5,2.5) (5,0.5) (2.5,-2) (0,0) (-3,-2) (-3.5,0.5)} node[anchor=west, xshift=0cm, yshift=-0.5cm] {\tiny $\cX_U$};
    \end{scope}

    \draw[very thick, decoration={markings,
    mark=at position 0.75 with \arrow{stealth}},postaction=decorate] (3,4) .. (1, 2) .. (0.8, 0.3) .. (0.4, -1.3) .. (-1, -1.0) .. (-1, 1.0) .. (0,0);
    
    \draw[very thick, decoration={markings,
    mark=at position 0.5 with \arrow{stealth}},postaction=decorate] (2.5,4) .. (0.5, 2) .. (0.2, 0.7) .. (0.1, -0.3) .. (-0.1,-0.1);

    \end{tikzpicture} \label{fig:rswa}}
         \subfloat[][RAR]{\begin{tikzpicture}[scale=0.45, inner sep=2pt, minimum size=0.4cm,node
    distance=.5cm,
    use Hobby shortcut,
    vec1/.style={
    decorate,
    decoration={
      markings,      %
      mark=at position 0.4 with {\arrow[rotate=5, line width=2pt]{stealth[scale=10]}},
    }
  }
]
    \begin{scope}[xshift=3cm, yshift=4cm, scale=0.3]
    \draw[fill opacity=0.3, fill=blue, text opacity=1]   plot[smooth, tension=.5] coordinates {(-3.5,0.5) (-3,2.5) (-1,3.5) (1.5,3) (4,3.5) (5,2.5) (5,0.5) (2.5,-2) (0,-0.5) (-3,-2) (-3.5,0.5)} node[anchor=east, yshift=0cm, xshift=0cm] {\tiny $\cX_I$};
    \end{scope}

    \begin{scope}[xshift=0cm, yshift=0cm]
    \draw[fill opacity=0.35, fill=green!60!blue, text opacity=1] (0,0) circle (0.7cm) node[anchor=east, yshift=-0.8cm, xshift=0.8cm]{\tiny $\mathcal{X}_G$};
    \end{scope}

    \begin{scope}
    \draw[fill opacity=0.35, fill=green, text opacity=1] (0,0) circle (1.2cm) node[anchor=west, xshift=-0.4cm, yshift=0.7cm]{\tiny $\cX_F$};
    \end{scope}

    \draw[] (0.4, -0.4) -- (0.8, -1.5);

    \begin{scope}[scale=0.3, xshift=8cm, yshift=2.5cm]
    \draw[fill opacity=0.35, fill=orange, text opacity=1] plot[smooth, tension=.5] coordinates {(-3.5,0.5) (-1,1) (-1,2) (1.5,3) (4,3.5) (5,2.5) (5,0.5) (2.5,-2) (0,0) (-3,-2) (-3.5,0.5)} node[anchor=west, xshift=0cm, yshift=-0.5cm] {\tiny $\cX_U$};
    \end{scope}

    \draw[very thick, decoration={markings,
    mark=at position 0.75 with \arrow{stealth}},postaction=decorate] (3,4) .. (1.3, 2) .. (0.6, 0.3) .. (0.4, -1) .. (-0.7, -0.7) .. (-0.8, 0.4) .. (-0.3,-0.3) .. (0.3, -0.3);

    \draw[very thick, decoration={markings,
    mark=at position 0.5 with \arrow{stealth}},postaction=decorate] (2,4) .. (-0.1, 2.3) .. (-0.2, 0.8) .. (0.1, 1.3) .. (0.3, 0.8) .. (0.2, 0.2);

    \end{tikzpicture} \label{fig:rar}}
         }
         
     \caption{Pictorial depiction of relevant properties in this work. Here, $\cI$ is the initial set, $\cU$ the unsafe set ($\cS$ is its safe complement), $\cG$ the goal/target set, $\cF$ the final set. (The entire state space is $\cX$.) Here, a dashed background denotes that the corresponding set's existence is implied by the corresponding certificate, but that it is not explicitly defined in the property. \new{Notably, this means that the set cannot be \emph{specified} a-priori when defining the property: e.g., the invariant set $\cG$ may be any size contained within $\cF$. This motivates the construction of the additional certificates (and corresponding properties) ROA and RAR, which instead allow for a-priori set specifications.} }
     \label{fig:properties-example}
\end{figure*}
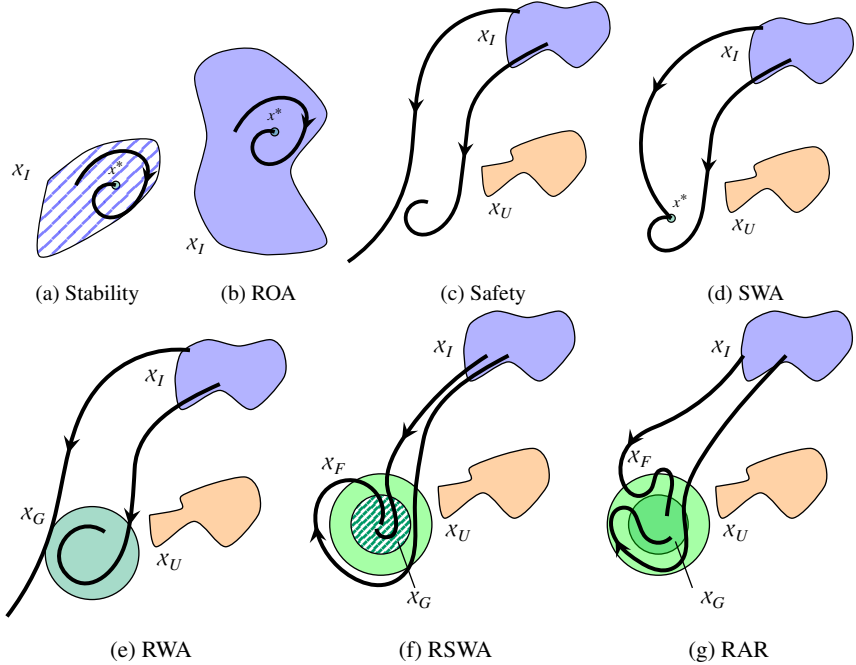

\subsection{Safety}
Safety is another fundamental property we can require from dynamical models. It involves the \emph{avoidance} of some unsafe region:
namely, that no trajectory starting from $\cI$ may enter the unsafe set $\cU$\footnote{We shall later draw connections between the concept of safety and the dual notion of (unconstrained) \emph{reachability}. Please refer to the discussions in  Section   \ref{subsec:classification} and in  \ref{sec:connections}.}; formally 
\begin{equation}
\label{eq:safety-spec}
    \forall \xi(t_0) \in \cI, \forall t  \in \extreal, t \geq t_0, \xi(t) \in {\cU}^\complement.
\end{equation}
For continuous-time models, safety over an unbounded time horizon can be proved via barrier certificates \cite{prajna2004StochasticSafetyVerification, prajna2006BarrierCertificatesNonlinear}. 
\begin{certificate}[Barrier Certificate]
    Consider a dynamical model $f$, a compact unsafe set $\cU$ and compact initial set $\cI$. 
    A function $B: \cX \subset \real^n \to \real, B \in \mathcal{C}^1$, 
    is a Barrier certificate if the following holds: 
    \begin{subequations}
    \label{eq:barr}
    \begin{align}
        &B(x) \leq 0 \ \forall x \in \cI, \\
        &B(x) > 0 \ \forall x \in \cU , \label{eq:barr-b} \\
        &\dot{B}(x) = \langle \nabla B, f(x) \rangle < 0  \ \forall x \in \{x : B(x) = 0 \}. \label{eq:barr-c}
    \end{align}
    \end{subequations}
\end{certificate}
\smallskip

Many characterisations of barrier certificates exist,  to cover different applications or to abide by additional constraints. These include reciprocal \cite{ames2017ControlBarrierFunction}, high-order (zeroing) \cite{tan2022HighorderBarrierFunctions}, or barrier conditions with modifications on the Lie derivative~\eqref{eq:barr-c}~\cite{prajna2004StochasticSafetyVerification}. These certificates are alike in that they certify safety properties. \new{However, the formulation used in this work are known to exist for any system which is safe \cite{ratschan2018ConverseTheoremsSafety}, and hence we consider this sufficient for this work.}

\begin{theorem}[Safety]
\label{thm:barrier}
    Given a model \eqref{eq:dyn-model}, a compact domain $\cX$, a compact initial set $\cI \subset \cX$ and an unsafe set $\cU$, alongside a barrier certificate, then \eqref{eq:safety-spec} holds. \hfill$\blacksquare$
\end{theorem}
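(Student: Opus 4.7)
The plan is to proceed by contradiction. Suppose some trajectory $\xi$ with $\xi(t_0) \in \cI$ reaches $\cU$ at time $\tilde t > t_0$. The strategy is to track the scalar signal $B \circ \xi$ along the trajectory and exploit that the barrier conditions~\eqref{eq:barr} force $B(\xi(t_0)) \leq 0$ while $B(\xi(\tilde t)) > 0$; the Lie-derivative condition~\eqref{eq:barr-c} will then preclude any upward crossing of the zero level set of $B$.

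Concretely, I would define $t^* = \inf\{t \in [t_0, \tilde t] : B(\xi(t)) > 0\}$, which is well posed as this set is non-empty and bounded below. Since $B \in \mathcal{C}^1$ and $\xi$ is $\mathcal{C}^1$, the composition $B \circ \xi$ is continuous, and I claim $B(\xi(t^*)) = 0$: a strictly positive value at $t^*$ would propagate to an open left-neighborhood and contradict the infimum, whereas a strictly negative value would propagate to an open right-neighborhood and push the infimum strictly above $t^*$. With $B(\xi(t^*)) = 0$ in hand, condition~\eqref{eq:barr-c} yields $\dot B(\xi(t^*)) < 0$, and continuity of $\dot B \circ \xi$ then gives $B(\xi(t)) < 0$ on some interval $(t^*, t^* + \epsilon)$. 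This contradicts the definition of $t^*$ as an infimum, which mandates the existence of times arbitrarily close to $t^*$ from above at which $B \circ \xi > 0$. Consequently $B(\xi(t)) \leq 0$ for all $t \geq t_0$, and~\eqref{eq:barr-b} forces $\xi(t) \in \cU^\complement$, establishing~\eqref{eq:safety-spec}.

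The main delicacy I expect is the zero-crossing step itself: the whole contradiction hinges on $B \circ \xi$ being genuinely continuous across any putative hitting time of $\cU$, which in turn requires forward existence of the trajectory on the interval $[t_0, \tilde t]$. This is implicit in the Lipschitz continuity of the closed-loop vector field together with compactness of the relevant sets, which rules out finite-time escape before the barrier has a chance to act. A minor secondary subtlety is that the condition $B > 0$ on $\cU$ tacitly requires $\cU \subseteq \cX$ for $B$ to even be defined there, which I would make explicit at the outset of the argument.
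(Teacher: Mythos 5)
Your proof is correct, but it takes a genuinely different route from the paper. The paper defines the sublevel set $\Omega_0 = \{x \in \cX : B(x) \leq 0\}$ (note the proof text has an apparent typo writing $\dot B$ instead of $B$ here), observes that the Lie-derivative condition~\eqref{eq:barr-c} makes the vector field point inward on $\partial\Omega_0$, and then invokes \emph{Nagumo's Theorem} (Thm.~\ref{thm:nagumo}) to conclude that $\Omega_0$ is forward invariant; since $\cI \subset \Omega_0$ and $\Omega_0 \cap \cU = \emptyset$, safety follows. You instead run a direct \emph{first-crossing} argument on the scalar function $t \mapsto B(\xi(t))$: by continuity there is a first time $t^*$ at which it hits zero, the Lie derivative is strictly negative there, so $B \circ \xi$ must dip below zero immediately after $t^*$, contradicting the infimum. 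Both are standard and correct; the paper's approach is conceptually cleaner in that it reduces to a named set-invariance theorem, while yours is more elementary and self-contained, making the mechanism of the contradiction visible without an external lemma. You also correctly flagged the two tacit hypotheses — forward existence of the solution up to the putative hitting time, and the need for the trajectory to stay in $\cX$ (so that the domain where \eqref{eq:barr-c} is assumed to hold actually covers $\xi(t^*)$) — which the paper's proof inherits implicitly through the hypotheses of Nagumo's Theorem but does not spell out either.
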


\subsection{Stable While Avoid}
It is natural to extend the aforementioned notions of stability and safety towards a combination of both, whereby all relevant trajectories converge towards an equilibrium point, while also avoiding a given unsafe set. Such a property is formally described as follows: 
\begin{multline}
\label{eq:stab-safe-spec}
    \forall \xi(t_0) \in \cI, \exists T \in \extreal, \forall t \in [t_0, T), \xi(t) \in {\cU}^\complement \\
    \wedge \forall \tau \geq T,
    \xi(\tau) \in \{x^*\}.
\end{multline}
Since \eqref{eq:stab-safe-spec} is simply the conjunction of a stability property and a safety property,
certifying this is equivalent to concurrently certifying both stability and safety hold: this task can be thus formally tackled by the following corollary. 
\begin{corollary}[Stable while avoid (SWA)]
\label{cor:stab-safe}
    Given a model \eqref{eq:dyn-model} with unique equilibrium point $x^* \in \cX$,  a compact domain $\cX$, a compact initial set $\cI \subset \cX$ and an unsafe set $\cU$ alongside a ROA certificate $V$ and  barrier certificate $B$, then \eqref{eq:stab-safe-spec} holds. \hfill$\blacksquare$
\end{corollary}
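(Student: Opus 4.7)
The plan is to treat this corollary as essentially a bookkeeping combination of the two previously established results, since the specification \eqref{eq:stab-safe-spec} is syntactically a conjunction of the ROA specification \eqref{eq:roa-spec} and the safety specification \eqref{eq:safety-spec} over a common initial set $\cI$. Accordingly, I would not develop any new analytical machinery; instead I would derive each conjunct from the appropriate earlier result applied to the corresponding certificate.

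First, since $V$ is by hypothesis a ROA certificate for $\cI$ and $f$ admits the unique equilibrium $x^* \in \cX$, I would invoke Corollary \ref{cor:roa} to conclude that for every $\xi(t_0) \in \cI$ there exists $T \in \extreal$ such that $\xi(\tau) \in \{x^*\}$ for all $\tau \geq T$. This discharges the convergence conjunct of \eqref{eq:stab-safe-spec} and simultaneously fixes the time witness $T$ that will also be used in the safety conjunct.

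Next, since $B$ is a barrier certificate with the same compact initial set $\cI$ and unsafe set $\cU$, I would apply Theorem \ref{thm:barrier} to obtain the strictly stronger fact that $\xi(t) \in \cU^\complement$ for \emph{all} $t \geq t_0$. Restricting this universally quantified statement to the sub-interval $[t_0, T)$ immediately yields the avoidance conjunct of \eqref{eq:stab-safe-spec}. Taking the logical conjunction of the two conclusions, for the same trajectory and the same witness $T$, produces exactly \eqref{eq:stab-safe-spec}.

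I expect essentially no technical obstacle here: the only subtle point is ensuring that the witness $T$ produced by the ROA argument can be reused in the safety conjunct, but this is automatic because safety from Theorem \ref{thm:barrier} holds on the unbounded horizon $[t_0, \infty)$, which trivially covers $[t_0, T)$ for any choice of $T$. One minor caveat worth acknowledging in the write-up is that the disjointness assumption $\cU \cap \cF = \emptyset$ stated in Section \ref{sec:prelim}, combined with $x^* \in \cF$ implicitly required for consistency, guarantees that the equilibrium itself lies in $\cU^\complement$, so that the behaviour on $[T,\infty)$ is consistent with safety and no contradiction arises between the two conjuncts.
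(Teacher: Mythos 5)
Your proposal is correct and matches the paper's own proof, which likewise observes that \eqref{eq:stab-safe-spec} is the conjunction of the convergence and avoidance requirements and then invokes Corollary~\ref{cor:roa} and Theorem~\ref{thm:barrier} to discharge each conjunct. Your remarks on reusing the time witness $T$ and on the equilibrium lying in $\cU^\complement$ are harmless elaborations that the paper leaves implicit, but the decomposition and key lemmas are identical.
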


Notice that it is alternatively possible to combine the conditions of \eqref{eq:lyap} and \eqref{eq:barr} into that of a \emph{single} certificate for stability and safety. Such a function is sometimes referred to as a Lyapunov-Barrier certificate \cite{wu2019ControlLyapunovBarrierFunctionbased, romdlony2016stabilization}.  However, in this work we choose to use two separate functions, as this makes synthesis easier and more modular.

\subsection{Reach While Avoid}
Let us now set asymptotic stability aside, and instead study properties over a finite time horizon: namely, we require that trajectories enter a non-singleton set in finite time. These are reachability-like properties, which require trajectories to reach a region 
known as a goal set.
A reach-while-avoid (RWA) property requires a non-singleton goal set to be reached within a finite time horizon $T$, while avoiding an unsafe region; in formal terms, 
\begin{multline}
\label{eq:rwa-spec}
    \forall \xi(t_0) \in \cI, \exists T \in \real, \forall t \in [t_0, T] :
    \\
    \xi(t) \in {\cU}^\complement \wedge \xi(T) \in \cG.
\end{multline}
Next, we introduce an RWA certificate to guarantee that this property holds for a model under consideration.

\begin{certificate}[RWA]\label{def:rwa}
Define an unsafe set $\cU = \cX \setminus \cS$, where $\cS$ is a compact safe set, a compact initial set $\cI \subset \interior(\cS)$, and a compact goal set $\cG \subset \interior(\cS)$ with non-empty interior.
A reach-while-avoid (RWA) certificate \cite{verdier2020FormalSynthesisAnalytic} is a function $V: \real^n \to \real, V \in \mathcal{C}^1$ if there exists $\gamma \in \real_+$,  such that
\begin{subequations}
\begin{align} 
    V(x) \leq 0           \quad & \forall x \in \cI,    \label{eq:rws-cert-a}        \\
    V(x) > 0              \quad & \forall x \in \partial \cS,  \label{eq:rws-cert-b} \\
    \dot{V} (x) \leq - \gamma \quad & \forall x \in \{x \in \cS : V(x) \leq 0 \} \setminus \cG. \label{eq:rws-cert-c}  
\end{align}
\end{subequations}
\end{certificate}

\medskip 

\begin{theorem}[Reach-While-Avoid]\label{thm:rwa} 
    Given a model \eqref{eq:dyn-model} and a RWA Certificate corresponding to the given sets of interest, then \eqref{eq:rwa-spec} holds.  \hfill$\blacksquare$
\end{theorem}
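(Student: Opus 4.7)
The plan is to treat the RWA certificate $V$ as a Lyapunov-like function whose non-positive sub-level set within $\cS$ is forward invariant until the trajectory hits $\cG$, and whose strict decrease along $f$ forces the trajectory to enter $\cG$ in finite time while never exiting $\cS$. Concretely, I would introduce the candidate invariant set $M := \{x \in \cS : V(x) \leq 0\}$ and the first hitting time $T := \inf\{t \geq t_0 : \xi(t) \in \cG\}$ (with $T = +\infty$ if no such $t$ exists), then show that (i) $\xi(t) \in \cS$ for all $t \in [t_0, T]$, and (ii) $T < +\infty$.

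For (i), the key is forward invariance of $M$ up to reaching $\cG$. By condition \eqref{eq:rws-cert-a}, $\xi(t_0) \in \cI$ yields $V(\xi(t_0)) \leq 0$, and $\cI \subset \interior(\cS)$ gives $\xi(t_0) \in M$. Suppose for contradiction that the trajectory leaves $M$ strictly before $T$; since $M$ is closed (intersection of the compact $\cS$ and a closed sub-level set of a continuous $V$), by continuity of $\xi$ there is an earliest exit time $t^\star \in (t_0, T)$ with $\xi(t^\star) \in \partial M$. The boundary $\partial M$ decomposes into points of $\partial \cS$ and points of $\cS$ with $V = 0$; condition \eqref{eq:rws-cert-b} rules out the former lying in $M$, so $V(\xi(t^\star)) = 0$. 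On $[t_0, t^\star)$ the trajectory lies in $M \setminus \cG$, so condition \eqref{eq:rws-cert-c} applies pointwise and integration yields $V(\xi(t^\star)) \leq V(\xi(t_0)) - \gamma (t^\star - t_0) < 0$, contradicting $V(\xi(t^\star)) = 0$. Hence $\xi(t) \in M \subseteq \cS = \cU^\complement$ for all $t \in [t_0, T)$, and $\xi(T) \in \cG \subseteq \cS$ follows from the compactness (hence closedness) of $\cG$ together with the continuity of $\xi$ and the definition of $T$ as an infimum.

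For (ii), a short comparison argument rules out $T = +\infty$: were $T$ infinite, the invariance established above would imply $\xi(t) \in M \setminus \cG$ for all $t \geq t_0$, so \eqref{eq:rws-cert-c} would give $V(\xi(t)) \leq V(\xi(t_0)) - \gamma (t - t_0)$ for all such $t$. But $V$ is continuous on the compact set $\cS$ and hence admits a finite minimum $V_{\min}$ on $\cS$, yielding a contradiction as $t \to \infty$. This also furnishes the explicit bound $T \leq t_0 + (V(\xi(t_0)) - V_{\min})/\gamma$, closing the proof of \eqref{eq:rwa-spec}.

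The main obstacle is the invariance step (i), where one must simultaneously rule out the two distinct ways the trajectory could leave $M$ before reaching $\cG$: crossing $\partial \cS$ and crossing the level set $\{V = 0\}$ in $\interior(\cS)$. The uniform strict decrease $\dot V \leq -\gamma$ (rather than merely $\dot V < 0$) is precisely what makes both arguments clean --- it prevents asymptotic lingering on $\{V = 0\}$ and immediately yields the finite-time bound in (ii); weakening it to $\dot V < 0$ would require a substantially more delicate LaSalle-type argument and would not deliver a uniform reach time.
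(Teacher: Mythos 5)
Your proof is correct and takes essentially the same route as the paper's: you define the same sublevel set $M = \{x \in \cS : V(x) \leq 0\}$ (the paper calls it $A$), use the uniform decrease $\dot V \leq -\gamma$ together with condition \eqref{eq:rws-cert-b} to show the trajectory cannot escape $M$ through $\partial\cS$, and use continuity of $V$ on the compact $\cS$ to bound $V$ from below and force a finite reach time. Your formulation via a first hitting time $T$ and a first exit time $t^\star$, with an explicit contradiction at $\partial M$, is a somewhat more rigorous rendering of the same argument the paper gives informally via stepwise comparison.
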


\begin{remark}[Unconstrained Reachability]
Unconstrained reachability can be defined as a special case of RWA, where we set $\cU = \emptyset$ (i.e. $\cS = \cX$). Hence, a certificate can be provided accordingly, as special instance of Certificate \ref{def:rwa}.  \hfill$\blacksquare$
\end{remark}

We emphasise that a RWA certificate does not prove that trajectories will \emph{remain} within the goal set, or that trajectories shall avoid the unsafe set for all (unbounded) time, nor does the specification in \eqref{eq:rwa-spec} indeed encode these requirements. In fact, since the Lie derivative condition \eqref{eq:rws-cert-c} does not hold across the goal set $\cG$, it is possible for trajectories to leave the goal set after entering it, and thereafter possibly enter the unsafe set $\cU$. This alternative, more restrictive scenario is addressed by the next certificate.

\subsection{Reach-and-Stay While Avoid}
A reach-and-stay while avoid (RSWA) property is similar to the RWA property, consisting of RWA with an additional requirement that trajectories will eventually remain within some final set indefinitely. Formally, trajectories should satisfy the property
\begin{multline}
   \label{eq:rswa-spec} 
    \exists \cG \ \new{\subset \interior(\cF)}: \forall \xi(t_0) \in \cI, \exists T \in \real, \forall t \in [t_0,T], \\ \xi(t) \in {\cU}^\complement 
    \wedge \xi(T) \in \cG \\
    \wedge \forall \tau \geq T: \xi(\tau)\in \cF.
\end{multline}
Notably, this property does not require that trajectories reach the final set and remain within it, and may enter and leave the final set as long as they eventually remain within the final set. However, in finite time trajectories must reach some subset of the final set a goal set, after which point they must remain within the final set for all time.
\begin{certificate}[RSWA]
   Define an unsafe set $\cU = \cX \setminus \cS$, where $\cS$ is a compact safe set, then define a compact initial set $\cI \subset \interior(\cS)$, and a compact final set $\cF \subset \interior(\cS)$.
   A RSWA \cite{verdier2020FormalSynthesisAnalytic} certificate is a function $V: \real^n \to \real$, $V \in \mathcal{C}^1$ if there exists $\gamma \in \real_+$ such that
   the following is satisfied:
    \begin{subequations}
    \begin{align}
    V(x) \leq 0           \quad & \forall x \in \cI,    \label{eq:rswa-cert-a}        \\       
    V(x) > 0              \quad & \forall x \in \partial \cS,  \label{eq:rswa-cert-b} \\    
    \dot{V} (x) \leq -\gamma \quad & \forall x \in \{x \in \cS : V(x) \leq 0 \} \setminus \cF, \label{eq:rwsa-cert-c}   \\ 
    V(x) > \beta \quad             & \forall x \in \partial\cF, \label{eq:rswa-cert-d}                                   \\
    \dot{V}(x) \leq -\gamma \quad & \forall x \in \cF \setminus \interior( \{x \in \cS : V(x) \leq \beta \}), \label{eq:rswa-cert-e} 
    \end{align}
    \end{subequations}
    for some constant $\beta \in \real^{-}$.

\end{certificate}

\begin{theorem}[Reach-and-stay while avoid]\label{thm:rswa}
         Given a model \eqref{eq:dyn-model} and a certificate corresponding to the given sets of interest, then \eqref{eq:rswa-spec} holds. \hfill$\blacksquare$
\end{theorem}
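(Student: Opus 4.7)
The plan is to decompose the proof into three parts: (i) forward invariance of the set $K := \{x \in \cF : V(x) \leq \beta\}$; (ii) confinement of the trajectory to $\cS$ at all times; and (iii) finite-time entry of the trajectory into $K$. The goal set $\cG$ whose existence is asserted by \eqref{eq:rswa-spec} will then be taken as $K$, which lies in $\interior(\cF)$ because \eqref{eq:rswa-cert-d} forces $V > \beta$ on $\partial \cF$, so $K$ cannot touch $\partial \cF$.

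For part (i), the boundary $\{V = \beta\} \cap \cF$ of $K$ is excluded from $\interior(\{x \in \cS : V(x) \leq \beta\})$, so \eqref{eq:rswa-cert-e} gives $\dot V \leq -\gamma < 0$ there, and a standard Nagumo-style argument yields forward invariance of $K$. For part (ii), I would observe that $V(\xi(t_0)) \leq 0$ by \eqref{eq:rswa-cert-a}; moreover, whenever $V(\xi(t)) = 0$ with the trajectory in $\cS$, either \eqref{eq:rwsa-cert-c} applies (if in $\cS \setminus \cF$) or \eqref{eq:rswa-cert-e} applies (if in $\cF$, since $0 > \beta$), making $\dot V \leq -\gamma$. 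Hence $V$ cannot become positive and the trajectory cannot cross $\partial \cS$, where $V > 0$ by \eqref{eq:rswa-cert-b}.

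For part (iii), I would set $T := \inf\{t \geq t_0 : V(\xi(t)) \leq \beta\}$. For $t < T$ the trajectory is in $\cS$ with $\beta < V(\xi(t)) \leq 0$, so in either $\cS \setminus \cF$ or $\cF$ one has $\dot V \leq -\gamma$; integrating yields the finite bound $T - t_0 \leq (V(\xi(t_0)) - \beta)/\gamma$, and by continuity $V(\xi(T)) = \beta$. Since $V > \beta$ on $\partial \cF$, $\xi(T) \notin \partial \cF$, so $\xi(T)$ lies either in $\interior(\cF)$ or in $\cS \setminus \overline{\cF}$. To rule out the latter I would argue by compactness: if $\xi(T) \in \cS \setminus \overline{\cF}$, then in a neighbourhood of $T$ the trajectory stays in $\cS \setminus \cF$ with $V$ continuing to decrease (by \eqref{eq:rwsa-cert-c}, since $V = \beta < 0$), which precludes any re-entry into $\cF$ (which would require $V > \beta$ at $\partial \cF$); but then $V \to -\infty$ on the compact set $\cS$, contradicting continuity of $V$. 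Hence $\xi(T) \in K$, and part (i) guarantees $\xi(\tau) \in K \subset \cF$ for all $\tau \geq T$.

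The main obstacle I anticipate is this location analysis at time $T$: combining continuity of $V \circ \xi$, the strict inequality \eqref{eq:rswa-cert-d} on $\partial \cF$, and the boundedness of $V$ on the compact domain to exclude the scenario in which the trajectory slips past $\cF$ entirely before $V$ descends to $\beta$. The remainder of the argument is essentially a composition of a reach-while-avoid estimate (in the spirit of Theorem~\ref{thm:rwa}) with a sublevel-set invariance argument, glued together by the intermediate threshold $\beta$.
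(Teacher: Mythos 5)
Your proof is correct, and it restructures the argument in a way that is arguably cleaner than the paper's. The paper invokes Theorem~\ref{thm:rwa} with target $\cF$ to get finite-time entry into $\cF$, and then, ``using a similar argument as before'', asserts that the trajectory subsequently reaches $\cF \cap B$, where $B = \{x \in \cS : V(x) \leq \beta\}$, and finally shows $\cF \cap B$ forward invariant. This two-stage ``reach $\cF$, then reach $\cF \cap B$'' organization glosses over the fact that after first entering $\cF$ the trajectory may exit again before $V$ drops to $\beta$. Your proof avoids this by tracking $V$ directly: you observe that on $\{\beta < V \leq 0\}$, whether the state is in $\cS \setminus \cF$ or in $\cF$, one of \eqref{eq:rwsa-cert-c} or \eqref{eq:rswa-cert-e} forces $\dot V \leq -\gamma$, so the first hitting time $T$ of $\{V \leq \beta\}$ is finite, and then you show by a compactness/contradiction argument that $\xi(T)$ must lie in $\cF$ --- the explicit step the paper never spells out. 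Your location argument is sound: if $\xi(T) \in \cS \setminus \cF$ with $V(\xi(T)) = \beta < 0$, the trajectory remains in $\cS \setminus \cF$ with $V \leq 0$ (it cannot reach $\partial\cF$ since $V > \beta$ there while $V$ continues to decrease, and it cannot reach $\partial\cS$ where $V > 0$), so $\dot V \leq -\gamma$ for all subsequent time, contradicting that the continuous $V$ is bounded below on the compact $\cS$. The invariance step for $K = \cF \cap \{V \leq \beta\}$ is also fine: since \eqref{eq:rswa-cert-d} gives $K \cap \partial\cF = \emptyset$, every $x \in \partial K$ lies in $\interior(\cF)$, hence cannot lie in $\interior(\{V \leq \beta\})$ (else it would be interior to $K$), so \eqref{eq:rswa-cert-e} applies and Nagumo's theorem delivers forward invariance. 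One small point worth flagging if you write this out in full: you should say explicitly that $\beta < 0$ (strictly) is what you use in part~(ii) to place the level set $\{V = 0\}$ inside the domain of \eqref{eq:rswa-cert-e} when $x \in \cF$.
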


The sub-level set of $V$ given by $\beta$ defines an invariant set contained with the final set, and ensures that trajectories reach this set in finite time without entering an unsafe region. Note that the specification described in \eqref{eq:rswa-spec} - and the corresponding certificate - \new{permit trajectories to enter and leave the final set, on the condition that eventually (and within finite time), they enter goal set and do not leave the final set again. Here, the goal set is implicitly defined by the  intersection of $\cF$ and the $\beta$ level set of the certificate, and is an invariant set contained fully within $\cF$. However, the shape and size of this $\cG$ is not specified a-priori, and can only be obtained after synthesis. The next certificate is motivated by a desire to allow $\cG$ and $\cF$ to be specified fully within the property.
}

\subsection{Reach, Avoid and Remain}
The final property we consider is again similar to the previous \emph{Reach and Stay While Avoid} property, but, as with the ROA certificate, we seek to remove the existential quantifier over the goal set from (\ref{eq:rswa-spec}). This means that the Reach Avoid Remain (RAR) property requires that trajectories remain within a final set after reaching a goal set, but for two given goal and final sets. 
We express this formally, as follows: 
\begin{multline}
\label{eq:rar-spec}
    \forall \xi(t_0) \in \cI, 
    \exists T \in \real, \forall t \in [t_0, T]: 
    \\
    \xi(t) \in {\cU}^\complement \wedge \xi(T) \in \cG \wedge \forall \tau \geq T: \xi(\tau) \in \mathcal{X}_F. 
\end{multline}
\begin{certificate}[RAR] 
\label{def:rar}
Define an unsafe set $\cU = \cX \setminus \cS$, where $\cS$ is a compact safe set, a compact initial set $\cI \subset \interior(\cS)$, a compact final $\cF \subset \interior(\cS)$, and  a compact goal set $\cG \subset \interior(\cF)$ with non-empty interior.
Let $V : \real^n \to \real$ be a RWA certificate, and a function $B: \real^n \to \real$, $B \in \mathcal{C}^1$, such that:
\begin{subequations}
\begin{align} \label{eq:rar-barr}
        &B(x) \leq 0 \ \forall x \in \cG, \\
        &B(x) > 0 \ \forall x \in \partial{\cF}  ,\\
        &\dot{B}(x) < 0  \ \forall x \in \{x : B(x) = 0 \}.
\end{align}
\end{subequations}
The pair $(V, B)$ define a Reach-Avoid-Remain certificate.  \hfill$\blacksquare$
\end{certificate}

\medskip

As with the Stable-While-Avoid certificate, we choose to formulate this certificate as a pair of separate functions, rather than collapsing the conditions to a single function. This choice, which of course does not affect the soundness of the approach, renders synthesis practically  easier and more modular. 

\begin{theorem}[Reach-Avoid-Remain]
\label{thm:rar}
    Given a model \eqref{eq:dyn-model} and a certificate pair $V,B$ satisfying the conditions in Certificate \ref{def:rar},  then \eqref{eq:rar-spec} holds. \hfill$\blacksquare$
\end{theorem}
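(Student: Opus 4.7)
The plan is to decompose the proof into two separable claims, mirroring the certificate's own decomposition into the RWA component $V$ and the barrier-style component $B$. First, by Theorem~\ref{thm:rwa} applied to $V$ and the sets $\cI, \cS, \cG$, any trajectory $\xi$ with $\xi(t_0) \in \cI$ admits a finite time $T$ such that $\xi(t) \in \cU^\complement$ for every $t \in [t_0, T]$ and $\xi(T) \in \cG$. This immediately discharges the first two conjuncts of \eqref{eq:rar-spec} and reduces the theorem to the \emph{remain} claim: $\xi(\tau) \in \cF$ for every $\tau \geq T$.

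For the remain claim I would use $B$ in two substeps. (i) The zero-sublevel set $\Omega = \{x \in \cX : B(x) \leq 0\}$ is forward-invariant under $f$: if $\xi(t^\star) \in \partial \Omega$, i.e.\ $B(\xi(t^\star)) = 0$, then the strict condition $\dot B(x) < 0$ on $\{B = 0\}$ together with the continuity of $t \mapsto \dot B(\xi(t))$ yields $B(\xi(t)) < 0$ for $t$ slightly larger than $t^\star$; this rules out any outward crossing of $\partial \Omega$ and delivers forward invariance in the spirit of Nagumo. (ii) Since $\xi(T) \in \cG$ and $B \leq 0$ on $\cG$ by the first condition of Certificate~\ref{def:rar}, we have $\xi(T) \in \Omega$, hence $B(\xi(\tau)) \leq 0$ for every $\tau \geq T$.

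To convert invariance of $\Omega$ into invariance of $\cF$, I would argue by contradiction. Suppose $\xi(\tau^*) \notin \cF$ for some $\tau^* > T$. Because $\xi(T) \in \cG \subset \interior(\cF)$ and $\xi$ is continuous in time, there must exist an intermediate $\tau' \in (T, \tau^*]$ with $\xi(\tau') \in \partial \cF$. But Certificate~\ref{def:rar} demands $B > 0$ on $\partial \cF$, contradicting $B(\xi(\tau')) \leq 0$ from (ii). No such $\tau^*$ exists, so $\xi(\tau) \in \cF$ for all $\tau \geq T$, completing \eqref{eq:rar-spec}. The main obstacle is step (i): inferring forward invariance of $\Omega$ from a strict descent condition imposed \emph{only} on $\partial \Omega$, rather than on a neighbourhood of it. The sketched argument works precisely because the inequality $\dot B < 0$ is strict on the boundary, which gives an open time interval in which $B \circ \xi$ is strictly decreasing about any boundary-crossing instant; without strictness, grazing trajectories would need to be ruled out by extra regularity or by a tangent-cone formulation.
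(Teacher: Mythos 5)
Your proposal is correct and follows the same route as the paper: the paper's proof simply states that the result follows directly from Theorems~\ref{thm:rwa} and~\ref{thm:barrier}, which is precisely your decomposition (Theorem~\ref{thm:rwa} for the reach-while-avoid conjuncts, and the barrier argument with $\cG$ playing the role of the initial set and $\partial\cF$ that of the unsafe set for the remain conjunct). You merely unfold the barrier step (forward invariance of $\{B \le 0\}$ plus the intermediate-value argument at $\partial\cF$) rather than citing Theorem~\ref{thm:barrier} as a black box.
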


We note that here, the certificate $B$ is similar to a Barrier certificate as defined in \eqref{eq:barr}, though with $\cG$ as the initial set and $\partial{\cF}$ as the unsafe set.  We have restated the function in this context for clarity.

\subsection{Summary and Classification of Properties}
\label{subsec:classification}

So far, we have presented a number of different properties 
that a dynamical model may conform to. 
These properties, and the certificates that sufficiently prove them to hold, can be complex and subtly different. However, we observe the following similarities between them: 
\begin{itemize}
    \item All certificates rely on a set on initial conditions $\cI$. In the case of a Lyapunov function, this set is implicitly defined a-posteriori to the synthesis of the certificate. 
    \item $\cU$ denotes a region trajectories should \emph{avoid} (and thus relate to a safety requirement).
    \item Either $\cG$ or $\{x^*\}$ denote a region which trajectories should enter or \emph{arrive} at. We leverage this notion to encompass both finite-time reachability and asymptotic stability, and note that it can be thought as the dual of the \emph{avoid} category. 
    \item Either $\cF$ or $\{x^*\}$ denote a region which trajectories should eventually \emph{remain} in, for all time, as soon as they have \emph{arrived} in a goal set. This notion is thus related to both (forward) set invariance and asymptotically stable equilibria. 
\end{itemize}

Based on these analogies, we introduce three labels \emph{avoid}, \emph{arrive} and \emph{remain}, and assign them to each certificate, in order to clarify their purpose and differences. We portray this relationships in Fig.~\ref{fig:venn}, 
where the label \emph{arrive} encompasses both stability and finite-time reachability. 

We note that this classification does not distinguish between some certificates, which we clarify here. 
Firstly, as previously mentioned, a Lyapunov function and a ROA certificate differ in that for the latter an initial set is explicitly specified a priori to synthesis. 
Meanwhile, the SWA, RSWA and RAR certificates satisfy all three labels. Stable while Avoid is easy to distinguish, as it handles asymptotic stabilty, whereas the others treat finite time reachability.
Finally, we differentiate the RSWA and RAR certificates as follows. A RSWA certificate proves that there exists a goal set, contained within a given final set, that trajectories will reach and afterwards never leave the final set.
Meanwhile a RAR certificate explicitly defines both the goal set and final set associated with this, allowing for a more elaborate specification.

\begin{figure}[htbp]
    \centering

    \resizebox{\linewidth}{!}{
        
    
    \tikzset{every picture/.style={line width=0.75pt}} %
    \definecolor{dkblue}{rgb}{0.00,0.33,0.68}
    \definecolor{avoid}{RGB}{239, 146, 52}
    \definecolor{arrive}{RGB}{105, 170, 230}
    \definecolor{remain}{RGB}{65, 117, 5}

    \begin{tikzpicture}[x=0.75pt,y=0.75pt,yscale=-1,xscale=1]
    
    \draw  [fill=avoid  ,fill opacity=0.6 ] (298.32,375.8) .. controls (366.2,376.3) and (420.71,447.77) .. (420.07,535.44) .. controls (419.43,623.11) and (363.88,693.77) .. (295.99,693.28) .. controls (228.11,692.78) and (173.6,621.3) .. (174.25,533.64) .. controls (174.89,445.97) and (230.44,375.3) .. (298.32,375.8) -- cycle ;
    \draw  [fill=arrive  ,fill opacity=0.6 ] (78.72,588.66) .. controls (78.17,522.98) and (155.07,469.08) .. (250.48,468.28) .. controls (345.89,467.48) and (423.69,520.08) .. (424.24,585.77) .. controls (424.79,651.45) and (347.89,705.35) .. (252.48,706.15) .. controls (157.07,706.95) and (79.27,654.35) .. (78.72,588.66) -- cycle ;
    \draw  [fill=remain  ,fill opacity=0.4 ] (122.23,586.58) .. controls (122.23,528.32) and (169.47,481.08) .. (227.73,481.08) .. controls (286,481.08) and (333.23,528.32) .. (333.23,586.58) .. controls (333.23,644.85) and (286,692.08) .. (227.73,692.08) .. controls (169.47,692.08) and (122.23,644.85) .. (122.23,586.58) -- cycle ; ;
    \draw  [dash pattern={on 4.5pt off 4.5pt}] (98.48,588.71) .. controls (98.48,542.66) and (135.81,505.33) .. (181.86,505.33) .. controls (227.9,505.33) and (265.23,542.66) .. (265.23,588.71) .. controls (265.23,634.75) and (227.9,672.08) .. (181.86,672.08) .. controls (135.81,672.08) and (98.48,634.75) .. (98.48,588.71) -- cycle ;
    \draw  [dash pattern={on 0.84pt off 2.51pt}] (272.92,593.44) .. controls (272.18,558.03) and (300.28,529.33) .. (335.69,529.33) .. controls (371.1,529.33) and (400.41,558.03) .. (401.15,593.44) .. controls (401.9,628.85) and (373.79,657.55) .. (338.38,657.55) .. controls (302.97,657.55) and (273.66,628.85) .. (272.92,593.44) -- cycle ;

    \draw (83,488.07) node [anchor=north west][inner sep=0.75pt]   [align=left] {};
    \draw (279,344.07) node [anchor=north west][inner sep=0.75pt]   [align=left] {};
    \draw (249,511.07) node [anchor=north west][inner sep=0.75pt]   [align=left] {};
    \draw (270.97,426.66) node [anchor=north west][inner sep=0.75pt]   [align=left] { \textbf{Barrier}};
    \draw (140,621.57) node [anchor=north west][inner sep=0.75pt]   [align=left] { \textbf{Lyapunov}};
    \draw (197.31,561.56) node [anchor=north west][inner sep=0.75pt]   [align=left] {\textbf{Stable}-\\ \textbf{Safe}\\};
    \draw (354.12,589.91) node [anchor=north west][inner sep=0.75pt]   [align=left] {\textbf{RWA}};
    \draw (283,560) node [anchor=north west][inner sep=0.75pt] [align=left] {\textbf{RSWA}};
    \draw (285,609.13) node [anchor=north west][inner sep=0.75pt]   [align=left] {\textbf{RAR}};
    \draw (130,566.32) node [anchor=north west][inner sep=0.75pt]   [align=left] {\textbf{ROA}};
    
    \matrix [below left, xshift=2.3cm, yshift=-0.8cm] at (current bounding box.north west) {
      \node [fill, avoid,label=right:\textbf{Avoid}] {}; \\
      \node [fill, arrive, label=right:\textbf{Arrive}] {}; \\
      \node [fill, remain,label=right:\textbf{Remain}] {}; \\
      \node [label=right:\textbf{Stability}, 
      path picture={
            \draw[dashed] (path picture bounding box.west) -- (path picture bounding box.east);
                    }]{} ;\\
      \node [label=right:\textbf{Reachability}, 
      path picture={
            \draw[dotted] (path picture bounding box.west) -- (path picture bounding box.east);
                    }]{} ;\\
    };
    
    \end{tikzpicture}
    
    }
    \caption{Euler
    Diagram depicting the semantic labels, \emph{arrive, avoid}, and  \emph{remain}, associated with each certificate in this work. By the dashed line, we group properties that exhibit asymptotic stability. By the dotted line, we denote properties that exhibit (finite-time) reachability. 
    }
    \label{fig:venn}
\end{figure}
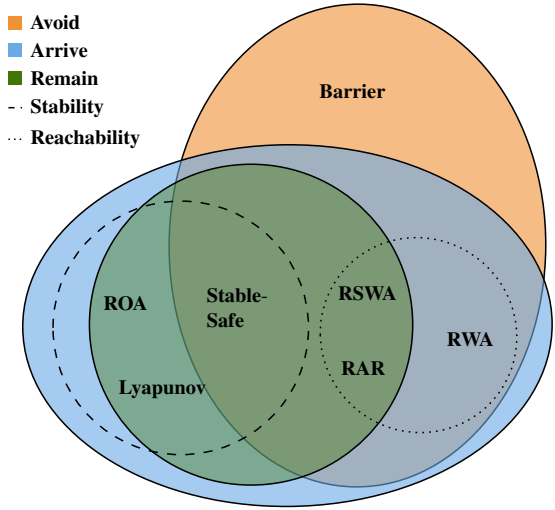

\subsection{Certificates for Control Models}
\label{rmk:control-cert}

Much work in the literature of certificate synthesis refers to, e.g., \emph{control} Lyapunov functions and \emph{control} Barrier certificates. In this work, we consider such terms to concern certificates that refer to a model expressed in terms of both state $x$ and control input $u$, as in  \eqref{eq:con-model}. 
Existing approaches often specify a control set, over which the specifications quantify existentially. In other words, they seek certificates for which there always exists a valid control action that allows for the property to hold. 
A control Lyapunov function therefore proves that there always exists a suitable control input such that the Lyapunov conditions hold, and hence the system is (asymptotically) stabilisable.  
This approach entails that, after a valid control-certificate is synthesised, control actions can be determined, 
e.g. by solving an optimisation program over the input space and the found certificate.

We emphasise that this is \emph{not} the approach taken in this work. Instead, we synthesise a feedback control law for the model described by  \eqref{eq:con-model}, and ``apply'' this state feedback to obtain a closed-loop model of the form of \eqref{eq:dyn-model}, for which we synthesise a certificate. Whilst we synthesise the control law \emph{concurrently} with the certificate, we do not refer to these as ``control certificates'', as in literature. Hence, we verify properties for control models using both a controller and a certificate, and in particular do not delegate the controller synthesis a-posteriori. 

\new{We note that in general neither approach has a clear benefit over the other, and more variants exist in literature. However, verifying a control-Lyapunov certificate based on an exists/for-all query would, in general, scale poorly with SMT solvers. As such, we prefer to break the problem into two tasks of \emph{guessing} a candidate controller and \emph{checking} it with a  certificate only dependent on the closed-loop dynamics.}

\section{Synthesis of Certificates and Controllers}
\label{sec:synthesis}

Following the introduction of specifications that are salient for verification purposes, and certificates whose existence prove that a given model satisfies these conditions, we describe next a unified, efficient, and sound algorithm for the synthesis of these certificates, for both dynamical and controlled models. 
The objective of the synthesis task is indeed twofold: we seek a feedback controller $k(x)$ that ensures a control model satisfies a desired specification, and concurrently we synthesise a certificate $C(x)$ that serves as a proof that the specification holds. As such, allow us to use $C(x)$ to refer to each function in a possible pair separately.
Call with $\theta_u$ the parameters of the state feedback law
and $\theta_c$ the parameters of the certificate function. The synthesis task therefore amounts to finding values for the parameters $\theta_u$ and $\theta_c$ such that the certificate conditions hold, i.e.,
\begin{equation}
    \exists \ \theta_u, \theta_c \ \forall x \in \cX :  \ \phi(C(x)),
\end{equation}
where $\phi(C(x))$ denotes the conditions, related to a given specification (as formalised in the previous section), that the certificate ought to satisfy. 
Our procedure is based on counter-example guided inductive synthesis (CEGIS) \cite{solar-lezama2006CombinatorialSketchingFinite}, an established approach to formally solving such exists-forall queries. CEGIS consists of two opposing components: a \emph{leaner} and a \emph{verifier} (cf. Figure \ref{fig:cegis}), which we detail in turn.

\subsection{Learner}

The learner fulfils the twofold task of designing both a stabilising control law (when required) and the desired certificate, as exemplified in Fig.~\ref{fig:learner-scheme}.

In this work, we seek \emph{feedback} control laws and employ neural networks as a general template for these functions:  neural architectures allow for a wide variety of control laws, as determined by the choice of their activation functions and of the number of neurons.  

Our candidate controller is thus simply the output of a neural network, contributing to the closed-loop dynamics, and which is then employed within the certificate synthesis procedure.  

For the certificate synthesis, we also employ a neural network as a template for $C(x)$: this allows not only for polynomial templates akin to classical (e.g., SOS-based) certificates, but also for highly non-linear functions, leveraging the expressive power of neural networks. 
One of the crucial components of a successful training within the learner is the definition of a tailored loss function.

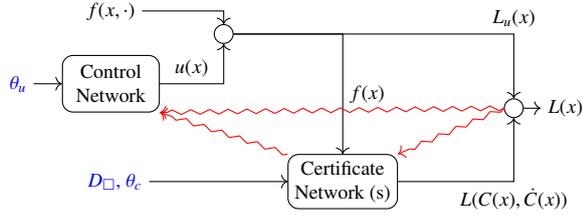
\begin{figure}[t]
    \centering
    \usetikzlibrary{arrows}
\usetikzlibrary{calc}
\usetikzlibrary{decorations.pathmorphing}

\begin{tikzpicture}[scale=0.75, transform shape]

    \node[draw, rectangle, rounded corners, minimum width=1.7cm, minimum height=1cm, align={center}] (ctrl) at (0, 0) {Control \\ Network};

    \node[left = 0.5cm of ctrl, align={center}, color=blue] (ctrl-input) {$\theta_u$};

    \node[align={center}, above = 0.5cm of ctrl] (fx) {$f_u(x, \cdot)$};

    \node[draw, circle, above right = 0.25cm and 1cm of ctrl] (sum) {};

    \node[draw, rectangle, rounded corners, below right = 2cm and 1cm of sum, align={center}] (lyap-net) {Certificate \\ Network (s)};

    \node[left of = lyap-net, node distance=4cm, color=blue] (inputs) {$D_{\square}$, $\theta_c$};

    \node[right of = lyap-net, node distance=3cm] (loss) {};

    \node[draw, circle, above = 1cm of loss] (loss-sum) {};

    \node[right = 0.3cm of loss-sum] (end) {$L(x)$};

    \node[above right = -0.1cm and 1cm of lyap-net] (above2) {};

    \draw[->] (ctrl-input) -- (ctrl);
    \draw[->] (fx) -| (sum);
    \draw[->] (ctrl) -|   node[near start,above]{$u(x)$}  (sum);

    \draw[->] (sum) -| node[near end, right]{$f(x)$} (lyap-net);

    \draw[->] (sum) -| node[above]{$L_u(x)$} (loss-sum);

    \draw[->] (lyap-net) -| node[below]{$L(C(x), \dot{C}(x))$} (loss-sum);

    \draw[->] (inputs) -- (lyap-net);

    \draw[->] (loss-sum) -- (end);

    \draw [->, red,
            line join=round,
            decorate, decoration={
                zigzag,
                segment length=7,
                amplitude=.9,post=lineto,
                post length=2pt}]
                (loss-sum.west) -- (lyap-net.north east);

    \draw [->, red,
            line join=round,
            decorate, decoration={
                zigzag,
                segment length=7,
                amplitude=.9,post=lineto,
                post length=2pt}]
                (lyap-net.north west) -- (ctrl.south east);

        \draw [->, red,
            line join=round,
            decorate, decoration={
                zigzag,
                segment length=7,
                amplitude=.9,post=lineto,
                post length=2pt}]
                (loss-sum.west) -- (ctrl.south east);

  \end{tikzpicture}
    \caption{Loss calculation block diagram. Blue indicates the inputs required for the loss calculation: namely the sampled data sets $D_\square$, the control and certificate network parameters $\theta_u$ and $\theta_c$, respectively. 
    Red arrows represent the back-propagation steps, updating the networks' parameters during training. $f_u(x, \cdot)$ corresponds to the model in \eqref{eq:con-model}, while $f(x)$ corresponds to the model in \eqref{eq:dyn-model}. 
    }
    \label{fig:learner-scheme}
\end{figure}

\subsubsection{Certificate Loss}
We note that all of the certificate conditions described in the previous section can be expressed in terms of inequalities, either as
\begin{equation}
\begin{split}
    C(x) \bowtie c \ \forall x \in \mathcal{X}_{\square},
\end{split}
\text{  or as }
\begin{split}
    \dot{C}(x) \bowtie c \ \forall x \in \mathcal{X}_{\square}, 
\end{split}
\end{equation}
where $\bowtie \coloneqq \{<, >, \geq, \leq \}$, $C(x)$ represents the certificate, $\mathcal{X}_{\square}$ represents the relevant set (e.g., $\cI$ or $\cX$)
and $c \in \real$.
We can then construct a loss function based on this observation, as follows. Consider a monotonically increasing function $m(\cdot)$ and 
suppose we have a finite set of sampled data points over the set $\mathcal{X}_{\square}$, which we denote as $D_\square$. A general loss function for any of the discussed certificate conditions is thus 
\begin{equation}
    \sum_{d \in D_\square} m(p \cdot C(d)),
\end{equation}
where $p = 1$ if $\bowtie \, = \{\leq, < \}$ and $p = -1$ if $\bowtie \, = \{\geq, > \}$. Note that this function penalises points in $D_\square$ where the required condition is not satisfied. 
Suitable choices for function $m$ are leaky-ReLU, which is piecewise linear, and softplus, which is smooth. 
Since several of the sets $\mathcal{X}_{\square}$ are boundaries of sets, or represent level sets, in practice we consider a small band around them, in order to encompass a sufficient number of data points in $D_\square$. 

\begin{example}
\emph{Example.}
Let us consider a barrier certificate $B(x)$ for safety verification (see  \eqref{eq:barr}). We create separate sets of finite samples for each set defined by the property (initial, unsafe, state-space), and the resulting loss function is given by
\begin{multline}
L = 
\frac{1}{N_I} \sum_{d \in D_I} m(B(d))
+ 
\frac{1}{N_U} \sum_{d \in D_U} m(-B(d))
\\
+ 
\frac{1}{N} \sum_{d \in Z_{B}} m(\dot{B}(d)).
\label{eq:L0-term}
\end{multline}
Here $Z_{B}(d) = \{ d \in D : |{B}(d)| \leq \epsilon \}$, where $D$ is the set of $N$ samples over the whole state space, $D_I$ is the set of $N_I$ samples over the initial set and $D_U$ is the set $N_U$ of samples of the unsafe set. \hfill $\blacksquare$

\end{example}

\subsubsection{Controller Loss}\label{sec:cont-loss}
Let us now discuss controlled models. 
Note that, as described in Figure \ref{fig:learner-scheme}, the parameters of the neural network controller appear in the loss function for the certificate, as described previously. Specifically, they manifest themselves in the terms corresponding to conditions on the Lie derivative of the certificate,  since these in turn depend on the control-dependent dynamics $f$. This should encourage the learner to seek a controller that enables a valid certificate. 
However, in practice we find that this is insufficient for robust synthesis in the case of \emph{arrive} requirements. While our certificates and properties make no assumption on the location of the goal set or of the equilibrium, 
oftentimes an equilibrium point (without loss of generality, the origin) lies within the goal set. 
If this is the case, 
trajectories converging to the origin exhibiting a negative Lie derivative are in turn attracted to the goal set.
Since the Lie derivative consists of two components, $f(x)$ and $\nabla C(x)$, it can be helpful to separate these elements within the loss function to encourage better learning. 
We thus propose the additional term for the loss function: 
\begin{equation}
\label{eq:ctrl-loss}
    L_u = \frac{1}{N} \sum_{d \in D} \frac{\langle d, f(d) \rangle}{||d|| \cdot ||f(d)||},
\end{equation} 
where  $\langle \cdot, \cdot \rangle$ is the inner product of its inputs and $||\cdot||$ is the 2-norm of its input, and $D$ is the set of $N$ samples over the state space. This loss is known as the \emph{cosine similarity} of the two vectors $d$ and $f(d)$, and rewards vectors for pointing in opposite directions. 
We interpret this loss function as separating $f(x)$ from $\nabla C(x)$ when calculating the Lie derivative, and instead replacing the corresponding certificate with a default positive definite function (that of Euclidean distance). This encourages the loss function to specifically focus on the parameters in the feedback law to learn a desirable $f(x)$. As an alternative interpretation, since any vector $d$ is fixed and points away from the origin, this encourages a controller which guides the dynamics to point towards to origin, and hence eventually converge towards it. 
We demonstrate the efficacy of this loss component in Section \ref{sec:res-cont}.

\subsection{Verifier}
\label{sec:cegis-subsec:verifier}
We now discuss the dual component of the CEGIS architecture, as portrayed is Figure \ref{fig:cegis}.  The verifier's role is assumed by an SMT solver and its operation is described as follows: let $\phi$ denote the conditions required for a given certificate. We seek a point $x \in \cX$  that violates any of the constraints $\phi$ associated to the certificate. To this end, we express the {\it negation} of such requirements $\phi$, and formulate a nonlinear constrained problem over real numbers. Formally, we ask an SMT solver to find a witness to 
\begin{equation}
    \label{smt-spec}
    \exists x \in \cX: \neg \phi(C(x)),
\end{equation} 
where any such witness $x$ would be considered a counterexample to the validity of the certificate $\phi$.

\begin{example*}[Example Cont'd]
 Consider the negation of barrier certificate conditions $\phi(C(x))$ as in  \eqref{eq:barr}, namely 
\begin{multline}
(x \in \cI \wedge B(x) > 0) \vee
(x \in \cU \wedge B(x) \leq 0) \vee
\\
(B(x) = 0 \wedge \dot{B}(x) \geq 0).
\label{eq:barrier-smt}
\end{multline}
The verifier searches for solutions $x$ of the constraints in  \eqref{eq:barrier-smt}.   
This in general requires manipulating non-convex 
functions and is therefore handled by an SMT solver. 
Whenever such $x$ is found, it represents a witness that the candidate $B(x)$ is \emph{not} a valid certificate function. 
\hfill $\blacksquare$

\end{example*}

\smallskip

The correctness of our algorithm hinges upon the soundness of the verification engine: we use two solvers, Z3 \cite{demoura2008Z3EfficientSMT} and dReal \cite{gao2013dRealSMTSolver}, both of which are sound over nonlinear real arithmetic.  
Z3 is restricted to polynomial reasoning, whereas dReal can handle non-polynomial expressions, for instance containing trigonometric or exponential terms, thus allowing for more complex models and certificates (via their activation functions).   
We note that dReal is a $\delta$-complete SMT-solver: while
this guarantees dReal will always find a counterexample if one exists, it may return also spurious counterexamples within a $\delta$-perturbation of the original formula \cite{gao2013dRealSMTSolver}.
This implies that our procedure may not terminate, even when the candidate is valid. 
However quite importantly it does not compromise the correctness of certificates we verify successfully.

\subsection{Enhanced Communication amongst Components}

Our CEGIS approach builds on that of the software tool \textsf{Fossil}~\cite{abate2021FOSSILSoftwareTool}.
As part of its CEGIS loop, \textsf{Fossil} adds two elements to enhance the communication between components. %

\textit{Translator.}
The translator is tasked with the conversion of the neural networks into a symbolic candidate $C(x)$
and the corresponding $\dot{C}(x)$, ready to be processed by the verifier (see Fig.~\ref{fig:cegis}).
The efficiency of SMT solvers depends also upon the numerical expressions of the formulae to be verified. Oftentimes the training returns numerically ill-conditioned expressions, e.g.  unreasonably small coefficients (e.g., in the order of $10^{-8}$);
these candidates might slow the verification step, and thus the whole procedure.
The translator thus rounds the coefficients of the candidate function to a specified precision, in order to help human interpretation and the verification process. \new{Notably, this rounding step is performed \emph{before} the SMT-based verification step (cf. Fig. \ref{fig:cegis}, meaning the correctness of the synthesised certificates and controllers is not compromised.}

\textit{Consolidator.}
Generating counterexamples is, in general, an expensive procedure, and the verification engine returns a single counterexample (\texttt{cex} in Fig.~\ref{fig:cegis}); 
e.g. 
an instance satisfying  \eqref{eq:barrier-smt}. 
Naturally, an isolated sample does not provide enough information for the learner to improve the candidate certificate. 
To overcome this issue,  
we randomly generate a \textit{cloud} of points around the \texttt{cex} point, 
since samples around a counterexample are also likely to invalidate the certificate conditions.  
Secondly, starting from \texttt{cex}, we compute the gradient of $C$ (or of $\dot{C}$) thanks to an automatic differentiation feature, and follow the direction that maximises the violation of the certificate constraints. 
The consolidator then aggregates 
the original counterexample and the newly generated points (denoted \texttt{cex$^+$} in Fig.~\ref{fig:cegis}) with the sample set $S$ for a new synthesis round by the learner.

\begin{figure}[htbp]
  \centering
  \begin{tikzpicture}[scale=0.9, transform shape]
  
  \node[draw, dashed, rectangle, rounded corners, align={center}, minimum width=8cm, minimum height = 4.5cm] (cegis) at (0, 0) {}; %
    \node[rectangle,draw, rounded corners, minimum width=1.7cm, minimum height=1cm] at (-2, 1) (lea)  {Learner};
    \node[rectangle, draw, rounded corners, minimum width=1.7cm, minimum height=1cm, below=1.5cm of lea] (simpli)  {Translator};
    \node[rectangle,draw, rounded corners, minimum width=1.7cm, minimum height=1cm] at (2, 1) (traj)  {Consolidator};
    \node[rectangle,draw, rounded corners, minimum width=1.7cm, minimum height=1cm, below=1.5cm of traj] (ver)  {Verifier};

        \node[] at (0, 2) {CEGIS};
		\node[right=0.5cm of ver] (out) {};
		\node[left=1.5cm of lea] (ool) {};
        
    \draw[->] (lea) edge node[left, align=center]{$C(x)$ \\ (Neural Net)} (simpli);
    \draw[->] (simpli) to node[above, align=center] {$C(x)$ \\ (Symbolic)} (ver);
    \draw[->] (ver) edge node[right]{\texttt{cex}} (traj);
    \draw[->]
    (traj) edge node [above]{$D_\square \cup \texttt{cex}^+$} (lea);
    \draw[->] (ver) to node[above,xshift=.3cm,yshift=.1cm] {\tt valid} (out);
    \draw[->] (lea) edge node[above, xshift=0.2cm, align={center}]{\texttt{out of} \\ \texttt{loops}} (ool);
    
  \end{tikzpicture}
  \caption{Enhanced CEGIS architecture within \textsf{Fossil}.}
  \label{fig:cegis}
\end{figure}
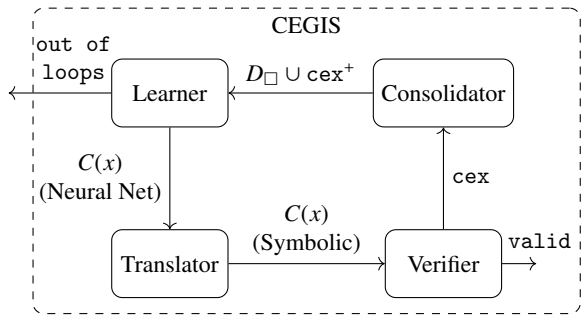 

\subsection{Comments on Specific Certificates}
We add a few remarks next, elaborating on practical synthesis details that are unique to specific  certificates.

\subsubsection{Lyapunov functions}
We assume without loss of generality that the given model has an equilibrium at the origin, \new{and that $k(0) = 0$ for the control law}. We aid synthesis by enforcing the positivity condition by construction: in the case of positive-definite polynomial functions this is done by fixing the output layer's weights $W_{k+1}$ to be all positive.

\subsubsection{ROA}
For verification, we estimate the smallest level set that contains $\cI$ using sample points. Then, we verify that $\cI$ is contained wholly within this level set and that the Lyapunov conditions hold over the entire level set. Due to dReal's inner workings, when verifying either Lyapunov or ROA  certificates, we remove a very small region around the origin from the verification domain. This is common practice \cite{chang2020NeuralLyapunovControl} across similar works, and does not affect sub-level sets outside this region. In fact, this notion is studied as $\epsilon$-stability \cite{gao2019NumericallyrobustInductiveProof}. This caveat  does not apply when using Z3.

\subsubsection{RSWA}
The choice of $\gamma$ for the RSWA and RWA certificates is arbitrary and may be fixed prior to synthesis. Meanwhile, the conditions described in (\ref{eq:rswa-cert-d}, \ref{eq:rswa-cert-e}) depend on the existence of the parameter $\beta$. 
We must prove a value for this parameter exists such that the relevant conditions hold in order to prove the certificate is valid. 
After successfully verifying the conditions which do not depend on $\beta$, we perform a line search for $\beta$ to find a suitable value \cite{verdier2020FormalSynthesisAnalytic}. 
If we do not find a valid $\beta$, we return to synthesis and keep training.

\section{Computational Experiments and Benchmarks}
\label{sec:experiments}

We have implemented the proposed framework based on the computational library of \textsf{Fossil}. 
We have tested our approach across a large number of case studies, and benchmarked it against the first release of \textsf{Fossil}, showing improved results.

\begin{table*}[htbp] 
    \centering
    \resizebox{\linewidth}{!}{
    \begin{tabular}{lrrllllllr}
    \toprule
       & $N_s$ & $N_u$ & Property & Neurons        & Activations                                                                      & \multicolumn{3}{c}{$T$ (s)} &                         \\ \cmidrule(lr){7-9}
       &       &       &             &                &                                                                                  & $\min$                      & $\mu$           & $\max$         &  S (\%) \\
    \midrule
    1  & 2     & 0     & Stability   & [6]            & [$\varphi_{2}$]                                                                  & 0.01 ($\approx$ 0.00)       & 0.16  (0.15)    & 1.50   (1.48)  & 100 \\
    2  & 3     & 0     & Stability   & [8]            & [$\varphi_{2}$]                                                                  & 0.28 ($\approx$ 0.00)       & 2.22  (0.45)    & 12.57  (3.31)  & 100 \\
    3  & 2     & 2     & Stability   & [4]            & [$\varphi_{2}$]                                                                  & 0.07  (0.01)                & 0.19  (0.02)    & 0.47   (0.04)  & 100 \\
    4  & 2     & 2     & Stability   & [5]            & [$\varphi_{2}$]                                                                  & 0.09  (0.01)                & 0.26  (0.02)    & 0.54   (0.03)  & 100 \\ \midrule
    5  & 2     & 0     & ROA       & [5]            & [$\sigma_{\mathrm{t}^2}$]                                                        & 0.71  (0.02)                & 1.17  (0.02)  & 2.59   (0.03)  & 50     \\
    6  & 3     & 3     & ROA         & [8]            & [$\varphi_{2}$]                                                                  & 1.24  (0.02)                & 39.08 (0.03)    & 287.89 (0.04)  & 100 \\ \midrule
    7  & 2     & 0     & Safety      & [15]           & [$\sigma_{\mathrm{t}}$]                                                          & 0.44  (0.35)                & 3.36  (2.90)    & 7.61   (7.11)  & 100 \\
    9  & 8     & 0     & Safety      & [10]           & [$\varphi_{1}$]                                                                  & 12.63 (7.71)                & 51.97 (32.75)   & 70.59  (44.66) & 70  \\
    10 & 3     & 1     & Safety      & [15]           & [$\sigma_{\mathrm{t}}$]                                                          & 1.57  (0.19)                & 11.87 (2.50)    & 51.08  (7.52)  & 90  \\ \midrule
    11 & 3     & 0     & SWA         & [6], [5]       & [$\varphi_{2}$], [$\sigma_{\mathrm{t}}$]                                         & 0.19  (0.05)                & 2.46  (0.100)   & 12.10  (0.20)  & 90  \\
    12 & 2     & 0     & SWA         & [5], [5, 5]    & [$\varphi_{2}$], [$\sigma_{\mathrm{sig}}$,$\varphi_{2}$]                         & 0.13  (0.06)                & 0.27  (0.14)    & 0.39   (0.20)  & 100 \\
    13 & 2     & 1     & SWA         & [8], [5]       & [$\varphi_{2}$], [$\varphi_{2}$]                                                 & 0.06  (0.03)                & 0.20  (0.10)    & 0.58   (0.24)  & 90  \\
    14 & 3     & 1     & SWA         & [10], [8]      & [$\varphi_{2}$], [$\sigma_{\mathrm{t}}$]                                         & 4.06  (0.87)                & 19.81 (2.73)    & 103.49 (7.23)  & 90  \\ \midrule
    15 & 2     & 0     & RWA         & [4]            & [$\varphi_{2}$]                                                                  & 0.14  (0.09)                & 1.81  (1.75)    & 4.70   (4.63)  & 100 \\
    16 & 3     & 0     & RWA         & [16]           & [$\varphi_{2}$]                                                                  & 1.36  (0.09)                & 14.10 (0.14)    & 72.97  (0.20)  & 90  \\
    17 & 2     & 1     & RWA         & [4, 4]         & [$\sigma_{\mathrm{sig}}$,$\varphi_{2}$]                                          & 0.59  (0.27)                & 6.82  (3.32)    & 20.07  (11.46) & 100 \\
    18 & 3     & 1     & RWA         & [5]            & [$\varphi_{2}$]                                                                  & 0.46  (0.11)                & 16.06 (5.81)    & 72.47  (44.64) & 80  \\
    19 & 2     & 2     & RWA         & [5]            & [$\sigma_{\mathrm{sig}}$]                                                        & 0.69  (0.40)                & 1.38  (0.94)    & 2.14   (1.90)  & 100 \\ \midrule
    20 & 2     & 0     & RSWA        & [4]            & [$\varphi_{2}$]                                                                  & 0.19  (0.03)                & 1.29  (1.04)    & 3.79   (3.37)  & 100 \\
    21 & 3     & 0     & RSWA        & [16]           & [$\varphi_{2}$]                                                                  & 4.81  (0.13)                & 27.14 (0.19)    & 80.95  (0.25)  & 100 \\
    22 & 2     & 0     & RSWA        & [5, 5]         & [$\sigma_{\mathrm{sig}}$,$\varphi_{2}$]                                          & 1.52  (0.06)                & 4.45  (0.19)    & 10.97  (0.35)  & 100 \\
    23 & 2     & 1     & RSWA        & [8]            & [$\varphi_{2}$]                                                                  & 0.21  (0.05)                & 0.67  (0.25)    & 1.19   (0.91)  & 100 \\
    24 & 2     & 2     & RSWA        & [5, 5]         & [$\sigma_{\mathrm{sig}}$,$\varphi_{2}$]                                          & 0.98  (0.16)                & 1.23  (0.28)    & 1.61   (0.46)  & 100 \\ \midrule
    25 & 2     & 0     & RAR         & [6], [6]       & [$\sigma_{\mathrm{soft}}$], [$\varphi_{2}$]                                      & 6.65  (1.08)                & 24.74 (6.46)    & 77.80  (15.06) & 100 \\
    26 & 2     & 2     & RAR         & [6, 6], [6, 6] & [$\sigma_{\mathrm{sig}}$,$\varphi_{2}$], [$\sigma_{\mathrm{sig}}$,$\varphi_{2}$] & 5.13  (1.34)                & 26.99 (9.90)    & 101.23 (60.14) & 100 \\
    \bottomrule
\end{tabular}

    }
    \caption{Results of synthesising certificates for all properties presented in this work. The first column indexes the benchmarks. $N_s$: Number of states, $N_u$: number of control inputs. We show the \emph{Property} being verified and network structure (\emph{Neurons} and \emph{Activations}). For certificates of two functions, comma-separated lists shows the different structures. Finally, we report success rate ($S$) and the minimum, mean ($\mu$) and maximum computation time $T$ over successful runs, in seconds. In brackets we show the time spent during the \emph{learning} phase.}
    \label{tab:main}
\end{table*}

\subsection{Main results}\label{sec:res-main}

Our new prototype tool, which we refer to as \textsf{Fossil} 2.0\footnote{The features and interface of \textsf{Fossil} 2.0 are detailed in \cite{fossil2}. We note for clarity that \textsf{Fossil} 2.0 is able to prove some properties for discrete-time models, but that this is not relevant to the topic of study in this work: continuous time dynamical models.}, is able to verify all properties described in Section \ref{sec:properties} for continuous-time models, both autonomous and controlled. 
We showcase the efficacy of our framework and corresponding tool across 26 benchmarks, borrowed from existing literature on certificate synthesis \cite{sankaranarayanan2013LyapunovFunctionSynthesis, verdier2020FormalControllerSynthesis, abate2021FOSSILSoftwareTool, vannelli1985MaximalLyapunovFunctions}. Note that, in some cases, we have modified these benchmarks to further challenge our approach, for instance by using disjoint, non-convex sets in the specifications. We consider a key strength of our approach to be its flexibility - we are able to perform well on straightforward and challenging benchmarks using certificates that represent both polynomials and more complex non-polynomial functions (as determined by the activation function of the neural network). We reflect this in our selection of benchmarks, including dynamics that are relatively simple and dynamics that involve transcendental and trigonometric functions. Due to the large number of benchmarks, details on the dynamics and sets can be found in an extended version of this paper \cite{edwards2023GeneralVerificationFramework}, and in the corresponding code-base, \href{https://github.com/oxford-oxcav/fossil}{https://github.com/oxford-oxcav/fossil}, where additional benchmarks can be also found.

The results are reported in Table~\ref{tab:main}, where for each benchmark we outline the number of variables $N_s$ and of control input $N_u$, the property to be verified (cf. acronyms introduced earlier), the number of neurons in each hidden layer and the corresponding activation functions for these layers. 
The number of neurons is denoted as a list, e.g. $[n_1, n_2]$ indicates that the first and second hidden layers are composed of $n_1$ and $n_2$ neurons, respectively. The activation functions for these hidden layers are denoted similarly. As mentioned, a strength of our methodology is its flexibility in terms of the form that certificates may take: we are able to synthesise polynomial certificates as well as non-polynomial certificates that represent more ``neural-typical'' functions -- this is illustrated in the ``Activations'' column of Table~\ref{tab:main}. By $\varphi_j$ we denote that the layer represents a polynomial function of order $j$; $\sigma_\textrm{sig}$ represents the sigmoid function, $\sigma_\textrm{t}$ represents the hyperbolic tangent function, $\sigma_{\textrm{t}^2}$ is the square of $\sigma_\textrm{t}$ and $\sigma_\textrm{soft}$ is the softplus function.

For almost all benchmarks, we use a linear control function. Our approach can handle more general nonlinear templates, but we emphasise that, as we solve a verification problem, rather than a control problem, we only seek a feedback law such that the property is satisfied by the closed-loop dynamics, and thus offer no guarantee on the optimality of this controller. Still, we use a nonlinear controller employing $\sigma_\textrm{t}$ functions for the benchmark number 10 of Table \ref{tab:main}.

We measure the robustness performance by running each experiment 10 times, where we initialise the network with different weights and a new dataset across separate random seeds. Our procedure is not guaranteed to terminate, so after a maximum number of CEGIS loops we stop it and consider the overall run a failure. In general, we allow 25 CEGIS loops; for the SWA and RAR certificates we allow 100 CEGIS loops as these certificates are composed of two functions.

We consider two metrics to assess the quality of our framework when attempting to synthesise a certificate and controller: how often it returns a successful result, the success rate $S$, and how long the algorithm takes for these successful runs, the time $T$. 
Table~\ref{tab:main} thus reports $S$, along with the average, minimum and maximum time for the procedure to terminate, under the $T$ column, denoted $\mu$, $\min$ and $\max$, respectively. In brackets, we also denote the amount of time spent during the learning phase of our procedure, with approximately all remaining time spent during the verification phase.

The success rate is consistently close to 100\%, with the minimum standing at 50\% for a single benchmark, highlighting the robustness of our approach over the presented broad range of complex properties. \new{The benchmark with a success rate of 50\% is \emph{Benchmark 5}, and is shown in Fig. \ref{fig:roa}. The difficulty of this benchmark is due to the nature of the dynamics, which admit only a non-polynomial global Lyapunov function (studied in \cite{vannelli1985MaximalLyapunovFunctions}). Further, we seek to test our approach by seeking a region of attraction for this model with a \emph{disjoint}, and thus non-convex, initial set. It is clear from the figure that the region of attraction cannot become too wide, or else it might fail over points of instability. We include a simple Lyapunov function for the model to demonstrate the benefit of using a ROA certificate. }

\begin{figure*}
    \centering
    
         \subfloat[\new Surface of a Lyapunov function (Benchmark 4) with the proven region of attraction shown in dashed line. This ellipsoidal shape is a typical example which motivates the use of the ROA certificate to enforce a specific region as stable.]{\includegraphics[width=0.3\linewidth]{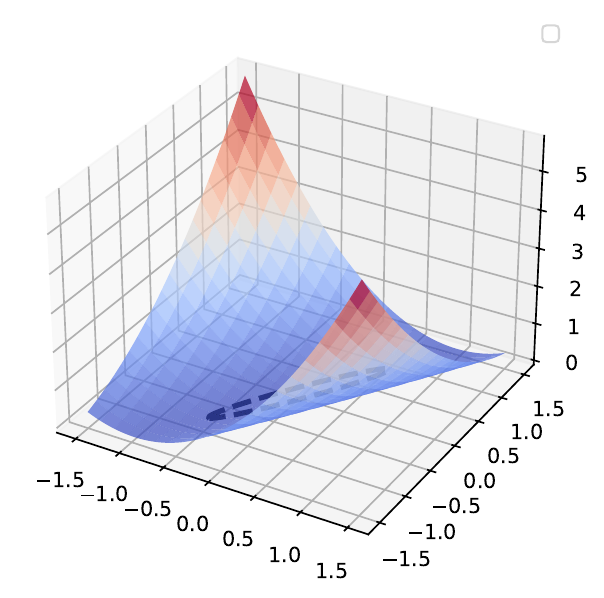} \label{plot:lyap}}   
         \hfill
         \subfloat[\new ROA certificate (Benchmark 5) showing the largest verified sub-level set and corresponding region of attraction, with the level set (smaller dashed line) of a synthesised Lyapunov function shown for comparison.]
         {\includegraphics[width=0.3\linewidth]{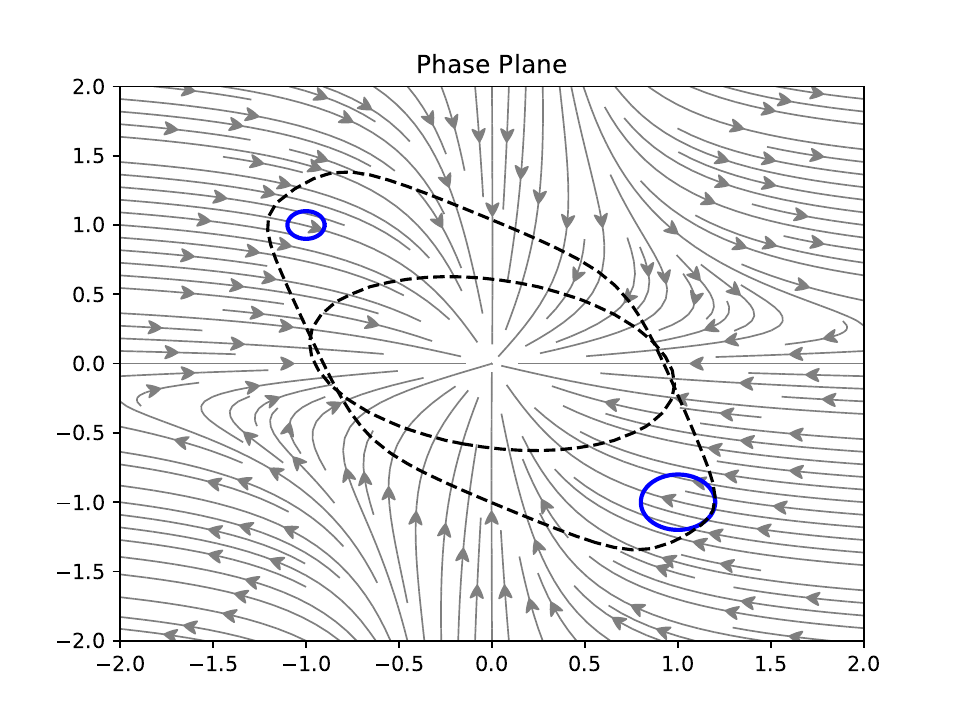} \label{plot:roa}}
         \hfill
         \subfloat[\new Barrier certificate (Benchmark 7) showing the initial (blue) and unsafe (red) overlaid on a phase portrait, and the zero contour of the barrier certificate (dashed black)]{\includegraphics[width=0.3\linewidth]{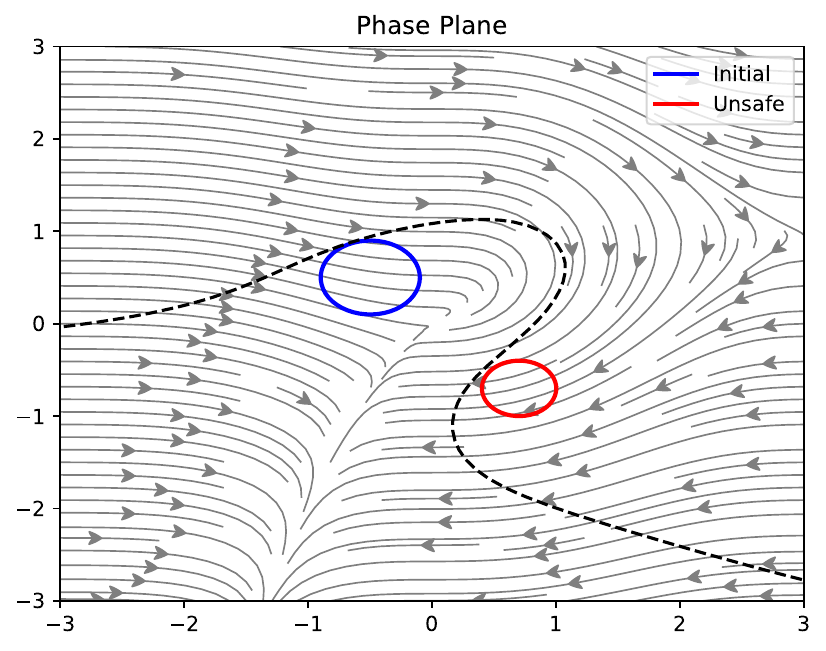} \label{plot:barr2}}
         \\
         
         \subfloat[\new Surface plot of a RWA certificate (Benchmark 17) with zero level set as dashed line. The light blue set depicts a nonconvex safe set (note the circular region which is unsafe, inside the safe rectangular set). The safe set is the complement of the unsafe set; we choose to show the safe set here as it is more clear in the context.]{\includegraphics[width=0.3\linewidth]{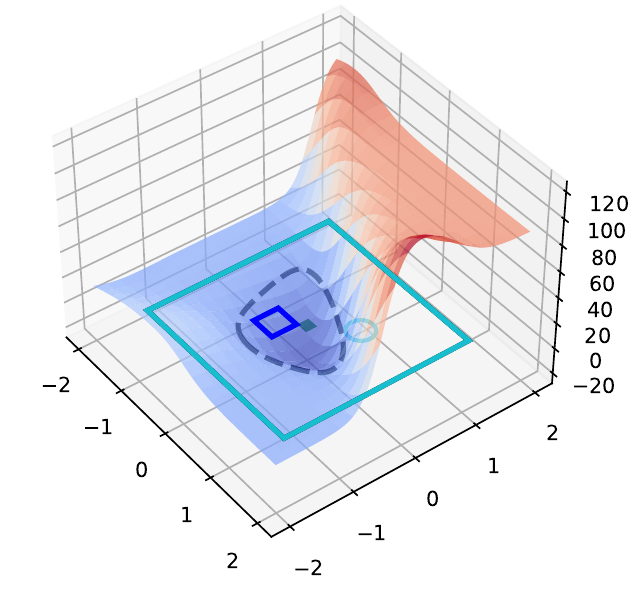} \label{plot:rwa}}
         \hfill
         \subfloat[\new Surface plot of a SWA certificate (Benchmark 12) showing the constituent ROA (left) and Barrier (right) components of the certificate. The region of attraction and zero contour are shown of the ROA and Barrier components respectively in dashed black line.]{\includegraphics[width=0.3\linewidth]{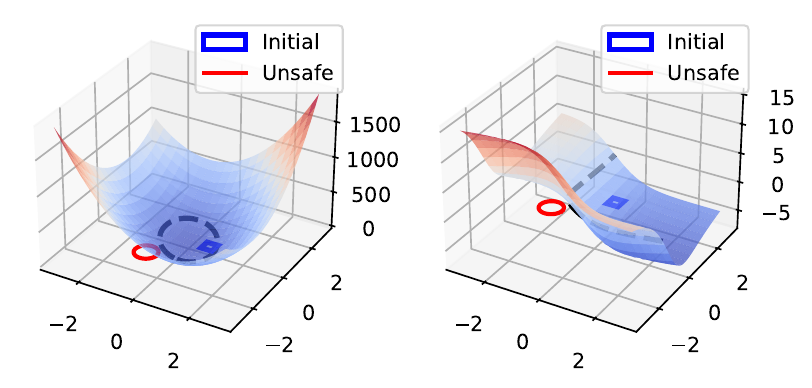} \label{plot:SWA}}
         \hfill
         \subfloat[\new RAR certificate (Benchmark 26) with zero level sets for both the RWA function (outer dashed line) and barrier component (inner dashed line). See the main caption figure for a description of the coloured sets.]{\includegraphics[width=0.3\linewidth]{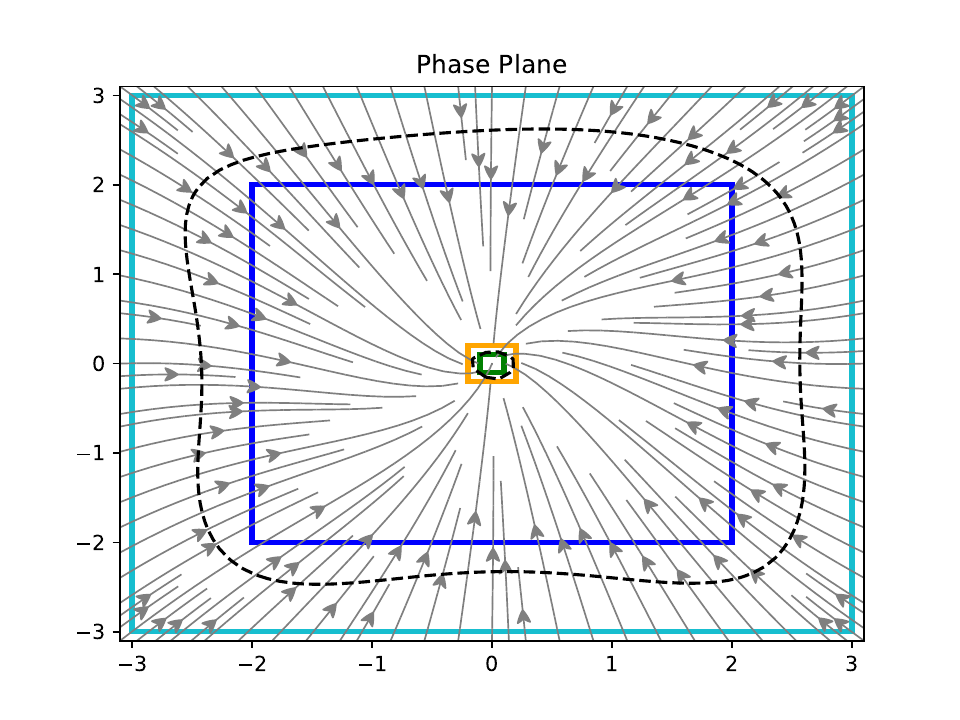} \label{plot:rar}}
     \caption{\new{Visualisations of a selection of certificates as either phase portraits (depicting the dynamics with relevant level sets of the certificate overlaid), or surface plots of the certificates (with relevant sets and level sets shown) from experiments in Table \ref{tab:main}. We show salient level sets as dashed lines, and denote others sets as follows. Dark blue: $\cI$; Red: $\cU$; light blue: $\cS$; green: $\cG$; orange: $\cF$}. %
     }
     \label{plot}
\end{figure*}

\subsection{Control Loss Evaluation}\label{sec:res-cont}
We have presented a novel loss function with the purpose of encouraging trajectories to converge to the origin, which is often desirable in the case of \emph{arrive} conditions, as discussed in Section \ref{sec:cont-loss}
We evaluate the inclusion of this loss function in two ways: first, we consider the effect on success rate and computation time relative to not having the term across all control benchmarks for RWA, RSWA and RAR properties, of which there are 6. These results are presented in Table \ref{tab:loss-comp}, where as before we present the success rate (this time over all 60 runs), and the average computation time for successful runs. It is clear that the inclusion of this loss term improves both the robustness and efficiency of the automated synthesis. 

\begin{table}[htbp]
    \centering
\begin{tabular}{llll}
\toprule
               & Success (\%)          & Time (s)          &  \\
\midrule
With $L_u$    & \textbf{96.67}        & \textbf{7.14 }    &  \\
Without $L_u$ & 80.00                 & 10.64             &  \\
\bottomrule
\end{tabular}

    \caption{Comparison of success rate and computation time for the combined RWA, RSWA and RAR benchmarks with control, with and without the loss term described in \eqref{eq:ctrl-loss}. }
    \label{tab:loss-comp}
\end{table}

Secondly, we compare again these two metrics when instead using an LQR feedback controller. This is the purpose of benchmarks 22 and 24, which are identical except for one key difference: benchmark 22 is equipped with a pre-computed LQR controller, \emph{de facto} representing an autonomous model, whilst we shall compute a control law in benchmark 24.
This allows us to compare our framework's ability to learn feedback laws which satisfy properties relative to a common baseline controller, the known LQR. The results for these two benchmarks are very similar, with our control approach performing slightly more efficiently. Note, we do not claim to outperform LQR controllers, as other cost matrices may perform better or worse; further, we do not provide the cost minimisation guarantees - we simply use these benchmarks as a baseline comparison.

\subsection{Comparison to \textsf{Fossil} 1.0  Baseline}
\label{sec:res-comp}

We have built a prototype tool based on our framework, improving on the work initially presented in~\cite{abate2021FOSSILSoftwareTool}.
This tool compares against competitive state-of-the-art techniques such as SOS-tools, proving to deliver a faster synthesis for stability and safety properties for autonomous models. 
Table~\ref{tab:lit-tab} collects works providing an automated and sound synthesis of  relevant certificates. 
\new{However, these related works either do not have supporting available software or are not maintained, making a direct comparison with these approaches infeasible beyond what has already been presented in presented in~\cite{abate2021FOSSILSoftwareTool}}: 
we therefore benchmark against \textsf{Fossil} 1.0 as the state-of-the-art for certificate synthesis, \new{as an accredited software tool for certificate synthesis that has been previously benchmarked against other approaches, such as SOS tools}. These results are presented in Table \ref{tab:fossil-comp}.

\begin{table}[htbp]
    \centering
    \resizebox{\linewidth}{!}{
    
\begin{tabular}{lllllllll}
    \toprule
                                                                                                 & \rotatebox{90}{Stability} & \rotatebox{90}{ROA} & \rotatebox{90}{Safety} & \rotatebox{90}{SWA} & \rotatebox{90}{RWA} & \rotatebox{90}{RSWA} & \rotatebox{90}{RAR} & \rotatebox{90}{Control}
    \\ \midrule
    {Fossil 2.0}                                                                                 & \cmark                   & \cmark              & \cmark                  & \cmark              & \cmark              & \cmark               & \cmark              & \cmark
    \\
    {Fossil 1.0 \cite{abate2021FOSSILSoftwareTool}}                                              & \cmark                   & \xmark              & \cmark                  & \xmark              & \xmark              & \xmark               & \xmark              & \xmark
    \\
    {F4CS \cite{verdier2020FormalControllerSynthesis},\cite{verdier2020FormalSynthesisAnalytic}} & \cmark                   & \xmark              & \cmark                  & \xmark              & \cmark              & \cmark               & \xmark              & \cmark
    \\
    NLC \cite{chang2020NeuralLyapunovControl}                                                    & \cmark                   & \xmark              & \xmark                  & \xmark              & \xmark              & \xmark               & \xmark              & \cmark
    \\

    \cite{ravanbakhsh2019LearningControlLyapunov, ravanbakhsh2015CounterExampleGuidedSynthesis}, \cite{ravanbakhsh2015CounterexampleGuidedSynthesisReachWS}                                   & \xmark                   & \xmark              & \xmark                  & \xmark              & \cmark              & \cmark               & \xmark              & \cmark
    \\
    \cite{zhao2021SynthesizingReLUNeural}                                                       & \xmark                   & \xmark              & \cmark                  & \xmark              & \xmark              & \xmark               & \xmark              & \xmark \\
    \cite{zhao2020SynthesizingBarrierCertificates}, \cite{zhao2021LearningSafeNeural}            & \xmark                   & \xmark              & \cmark                  & \xmark              & \xmark              & \xmark               & \xmark              & \cmark \\

    \cite{ratschan2018SimulationBasedComputation}                                               & \xmark                   & \xmark              & \cmark                  & \xmark              & \xmark              & \xmark               & \xmark              & \xmark \\
    \cite{kapinski2014SimulationguidedLyapunovAnalysis}                                         & \cmark                   & \xmark              & \cmark                  & \xmark              & \xmark              & \xmark               & \xmark              & \xmark \\ 
    \cite{dai2020CounterexampleGuidedSynthesis}                                                 & \cmark                   & \xmark              & \xmark                  & \xmark              & \xmark              & \xmark               & \xmark              & \xmark \\
    \cite{grande2023augmented, grande2023passsive}                                              & \cmark                   & \xmark              & \xmark                  & \xmark              & \xmark              & \xmark               & \xmark              & \cmark \\
    \bottomrule 
\end{tabular}

    }
    \caption{Comparison of works for automated (and sound) synthesis of certificates for continuous-time dynamical models. We show the properties verified in the respective works, and whether they are also able to verify these properties for control models (either using control certificates or controller and certificate). Similar works are grouped together.
    }
    \label{tab:lit-tab}
\end{table}
We employ the benchmarks originally outlined in \cite{abate2021FOSSILSoftwareTool}, and for a fair comparison we use the same network structure (width and activations) for both tools. It is clear that the approaches are very similar when synthesising the more straightforward Lyapunov functions. However, we achieve significant improvements in terms of both success rate and synthesis time relative to the baseline, on account of a fine-tuned loss function,  
an enhanced communication between the CEGIS components,  
and an overall improved software implementation.

\begin{table*}[ht]
    \centering
    \resizebox{\textwidth}{!}{
    \begin{tabular}{lllllrrrrrrrr}
    \toprule
              &       &             &          &                                                   & \multicolumn{4}{c}{Fosill 1.0} & \multicolumn{4}{c}{Fossil 2.0}                                                \\ \cmidrule(lr){6-9} \cmidrule(lr){10-13}
    Benchmark & $N_s$ & Property & Neurons  & Activations                                       & $\min$                         & $\mu$                          & $\max$ & $S$ & $\min$ & $\mu$ & $\max$ & $S$ \\ 

    \midrule
    NonPoly0  & 2     & Stability   & [5]      & [$\varphi_{2}$]                                  & 0.04                           & 0.21                           & 1.58   & 100 & 0.01   & 0.16  & 1.54   & 100 \\
    Poly2     & 2     & Stability   & [5]      & [$\varphi_{2}$]                                  & 0.35                           & 11.71                          & 70.39  & 90  & 0.08   & 4.77  & 6.50   & 100 \\
    Barr1     & 2     & Safety      & [10]     & [$\varphi_{1}$]                                  & 0.34                           & 1.00                           & 2.72   & 40  & 0.02   & 0.27  & 0.63   & 100 \\
    Barr3     & 2     & Safety      & [10, 10] & [$\sigma_{\textrm{sig}}$,$\sigma_{\textrm{sig}}$] & 16.80                          & 101.72                         & 334.79 & 50  & 3.81   & 14.14 & 30.63  & 100 \\

    \bottomrule
\end{tabular}

    }
    \caption{Comparison of \textsf{Fossil} 1.0 vs \textsf{Fossil} 2.0 (the present work). Here, we use the same naming scheme for benchmarks as used in \textsf{Fossil} 1.0, rather than indexing by number. See Table \ref{tab:main} for details on the columns.}
    \label{tab:fossil-comp}
\end{table*}

\section{Discussions on Generality}
\label{sec:lim}

\subsection{Asymptotic Reachability}
\label{sec:lim-convergence}

Oscillations are important and common phenomena occurring in dynamical systems, typically linked to (stable) limit cycles. 
Certificates for stability analysis in the presence of limit cycles, or equivalently for the asymptotic convergence to such set, have been so far notably absent from the synthesis framework in this manuscript. We discuss them next, 
recalling Barbashin-Krasowskii-Lasalle's Principle \cite{khalil2002NonlinearSystems}.
\begin{theorem}[Invariance Principle \cite{khalil2002NonlinearSystems}]
     Let $\Omega \subset \cX$ be a compact set that is positively invariant with respect to \eqref{eq:dyn-model}. Let $V : \cX \to \mathbb{R}$ be a continuously differentiable function such that $\dot{V}(x) \leq 0$. Let E be the set of all points in $\Omega$ where $V(x)=0$. Let $M$ be the largest invariant set in $E$. Then every solution starting in $\Omega$ approaches $M$ as $t\to \infty$. 
     \hfill$\blacksquare$
\end{theorem}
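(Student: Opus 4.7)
The plan is to follow the classical proof of LaSalle's invariance principle, noting that the statement as quoted should read ``$\dot V(x)=0$'' in the definition of $E$ (otherwise the conclusion fails for any $V$ that is not identically zero on its zero-set along a trajectory). First I would exploit compactness of $\Omega$ together with continuity of $V$ to deduce that $V$ is bounded below on $\Omega$. Then, for any trajectory $\xi(t)$ initialised at $\xi(t_0)\in\Omega$, positive invariance of $\Omega$ keeps the trajectory in $\Omega$ for all $t\geq t_0$, and the hypothesis $\dot V(x)\leq 0$ implies that $t\mapsto V(\xi(t))$ is monotonically non-increasing. Being non-increasing and bounded below, it admits a finite limit, say $c := \lim_{t\to\infty} V(\xi(t))$.

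Next I would introduce the $\omega$-limit set of the trajectory, $\omega(\xi) = \{ y \in \cX : \exists\, t_n \to \infty \text{ with } \xi(t_n) \to y \}$. Since $\Omega$ is compact and the trajectory remains in $\Omega$, standard dynamical systems arguments give that $\omega(\xi)$ is non-empty, compact, contained in $\Omega$, and (forward) invariant under the flow of \eqref{eq:dyn-model}; the key ingredients here are sequential compactness of $\Omega$ and continuous dependence on initial conditions, itself inherited from the Lipschitz assumption on $f$. This topological step is the main obstacle, since care is needed to pass from convergence along a subsequence $\xi(t_n)\to y$ to forward-invariance of the whole $\omega$-limit set.

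Once $\omega(\xi)$ is known to be invariant, continuity of $V$ and the definition of $c$ give $V(y)=c$ for all $y\in\omega(\xi)$, so $V$ is constant on $\omega(\xi)$. Because $\omega(\xi)$ is invariant, any trajectory starting in $\omega(\xi)$ stays there, hence $V$ remains equal to $c$ along it and $\dot V = 0$ on $\omega(\xi)$. Therefore $\omega(\xi)\subseteq E$, and since $\omega(\xi)$ is invariant and $M$ is by definition the largest invariant subset of $E$, we conclude $\omega(\xi)\subseteq M$.

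Finally, to convert $\omega(\xi)\subseteq M$ into the conclusion $\xi(t)\to M$ as $t\to\infty$, I would argue by contradiction: if $\mathrm{dist}(\xi(t),M)\not\to 0$, then there exist $\varepsilon>0$ and a sequence $t_n\to\infty$ with $\mathrm{dist}(\xi(t_n),M)\geq \varepsilon$. By compactness of $\Omega$, we can pass to a convergent subsequence $\xi(t_{n_k})\to y^*\in\Omega$, and $y^*\in\omega(\xi)\subseteq M$, contradicting $\mathrm{dist}(y^*,M)\geq\varepsilon$. This gives $\lim_{t\to\infty}\mathrm{dist}(\xi(t),M)=0$ for every solution starting in $\Omega$, as required.
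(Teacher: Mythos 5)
The paper does not actually prove this theorem: it states the Invariance Principle verbatim and cites \cite{khalil2002NonlinearSystems}, relying on the textbook result without reproducing its argument (it is also absent from \ref{app:proofs}, which only proves the results original to the paper). Your outline is a faithful reconstruction of the standard LaSalle argument as given in Khalil: monotone boundedness of $V$ along a trapped trajectory yields a limit $c$; the $\omega$-limit set is nonempty, compact, and invariant by compactness of $\Omega$ and Lipschitz continuity of $f$; $V\equiv c$ on $\omega(\xi)$ forces $\dot V=0$ there, placing $\omega(\xi)$ inside $E$ and hence inside $M$; and a sequential-compactness contradiction upgrades $\omega(\xi)\subseteq M$ to $\operatorname{dist}(\xi(t),M)\to 0$. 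All steps are sound, and the most delicate one — establishing invariance of the $\omega$-limit set via continuous dependence on initial data — is flagged correctly.

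You are also right that the paper's statement contains a transcription error: $E$ should be the set of points in $\Omega$ where $\dot V(x)=0$, not $V(x)=0$. As written, the statement does not match Khalil's theorem and is false in general (e.g. take $V$ a strict Lyapunov function vanishing only at the origin, with $\Omega$ a sublevel set not containing the origin on its boundary; then $E$ as written need not contain the $\omega$-limit sets). Your correction restores the intended claim, and the remainder of your proof depends on this corrected reading, as it must.
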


Crucially, the principle states that 
we can prove asymptotic convergence towards a set thanks to a Lyapunov-like function which is equal to zero \emph{exactly} at the limit cycle. Such certificates have recently been studied from the perspective of disturbed models \cite{meng2021ControlNonlinearSystems, meng2021SmoothConverseLyapunovBarrier}. Nonetheless, without prior knowledge of the existence and location of a limit cycle it is, in our experience, impractical to automatically 
synthesise such certificates, as they would need to be strongly templated based on the limit cycle (namely, precisely tailored to that set). Whilst in principle our approach could offer certificates for such properties, in this work 
we omit certificates requiring such an extensive analysis of the model dynamics.

\subsection{Sufficiency of the Certificates and Completeness of their Synthesis} 
\label{sec:lim-suff}

The certificates provided in this work are \emph{sufficient} proofs for the corresponding properties. We do not in general provide guarantees that a certificate exists if the property is satisfied, i.e. \emph{necessary} proofs.
Such converse results exists for Lyapunov and barrier functions. 
In particular, a construction method of Lyapunov functions is known for 
globally exponentially stable models \cite{khalil2002NonlinearSystems}, 
whilst 
the necessity of barrier certificates is studied in, e.g., \cite{prajna2005necessity,wisniewski2015converse,ratschan2018converse}. 

Contextually, whilst \emph{sound}, our counterexample-based inductive synthesis method is not \emph{complete}: whenever it finds a certificate, the desired property formally holds for the model under consideration. On the other hand, if our algorithm fails to find a certificate we cannot draw any conclusion about the validity of the property for the given model. We show that our procedure consistently terminates with a successful outcome. 

\subsection{Modularity of the Synthesis and Nested Properties} 
We have not considered properties describing that of sequential reachability - namely trajectories arriving at a series of target sets before the goal set. 
We have considered properties obtained by conjunction of requirements, and we shall comment in  \ref{sec:connections} instances obtained via disjunction. We could similarly obtain certificates by manipulating via propositional logic, (e.g., conjunction and disjunction, rather than temporally, as suggested previously) requirements and corresponding certificates. 
Such properties tie into our approach of using certificates in a modular fashion, which is a relatively unexplored concept and represents an area of future work.

\subsection{Issues of Scale}
Our CEGIS-based approach consists of a gradient descent based learning phase followed by an SMT dependent verification phase. While gradient descent is known to scale well to higher dimensions, nonlinear real arithmetic SMT does not in general scale well to higher dimensions. This problem is shared to all methodologies which rely on SMT. %
Possible mitigations, beyond an improvement of SMT performance, include the use of an alternative verification method, e.g., software as Marabou \cite{katz2019MarabouFrameworkVerification} for ReLU networks, or interval bound propagation based techniques.

\subsection{Broader Connections and Taxonomy of Properties}   

In Section \ref{sec:properties} we have presented a diverse set of certifiable properties in terms of the behaviour of trajectories (that is, of solutions) of a given dynamical model. 
Furthermore, in Section \ref{subsec:classification}, we have laid connections across such properties through the notions of \emph{avoid, arrive}, and \emph{remain}. 
In \ref{sec:connections}, we further frame such requirements in broader contexts, providing a categorisation of these  properties within known classes of specifications. 
To this end, we draw connections with formal languages (in particular, regular expressions) and with automata theory (specifically, deterministic finite automata) \cite{hu79}, and in passing we also informally relate to temporal logic for specifications of reactive models~\cite{Pnueli77,bk08,mc18}. 

\ref{sec:connections} is written for the benefit of readers with a background in the mentioned areas, or with an interest in a perspective on dynamical models grounded upon formal methods  \cite{tabuada09,belta2017formal} - this part may be otherwise dispensed with, at no loss of understanding of the overall material.

\section{Concluding Remarks}
\label{sec:conclusion}
We have presented a general framework to formally verify dynamical and control models via certificate synthesis, and introduced \textsf{Fossil} 2.0, a prototype software tool for the automated formal synthesis of a broad range of certificates. Certificate synthesis is based on a CEGIS loop, exploiting neural networks to provide candidate functions, which are then formally verified with the help of SMT solvers. 
Our approach is able to efficiently synthesise certificates for a wide range of properties for both autonomous and control models, with varying complexity of dynamics and sets structure. 
We test our framework and corresponding tool on a number of  benchmarks, outperforming a state-of-the-art tool for certificate synthesis and showing robustness to initialisation. 

In future work, we hope to explore further the modular synthesis of certificates, as well as to extend our framework to stochastic models. 

\begin{table}
    \new{\begin{tabular}{|c | c |}
    \hline
    \textbf{Abbreviation} & \textbf{Definition} \\
    \hline
    ROA & Region of Attraction  \\
    \hline
    RAR & Reach-Avoid-Remain \\
    \hline
    RSWA & Reach-and-Stay While Avoid\\
    \hline
    RWA & Reach-While-Avoid  \\
    \hline
    SWA & Stable While Avoid \\
    \hline
    CEGIS & Counterexample-Guided Inductive \\ 
    & Synthesis\\
    \hline 
    SAT & Satisfiability \\ 
    \hline 
    DPLL &  Davis-Putnam-Logemann-Loveland \\
    \hline
    SMT & Satisfiability Modulo Theories \\
    \hline
    \end{tabular}}
    \caption{List of Abbreviations defined in this work.}
\end{table}

\section*{Acknowledgements}
Alec was supported by the EPSRC Centre for Doctoral Training in Autonomous Intelligent Machines and Systems (EP/S024050/1)

\appendix

\section{Proof of Theorems}
\label{app:proofs}

\begin{proof}(Proof of Thm. \ref{thm:lyap})
We state without proof from Lyapunov theory \cite{sastry1999NonlinearSystems} that the conditions in \eqref{eq:lyap} imply that $f(x)$ is asymptotically stable, and that all sub-level sets of $V(x)$ fully contained within $\cX$ are forward invariant. Since $V$ is continuous and $\cX$ is non-empty, then there exists some $\beta$ such that the sub-level set $\Omega_\beta = \{x \in \cX \ : \ V(x) \leq \beta \}$ is fully contained within $\cX$ and $x^* \in \Omega_\beta$. Since $x^*$ is asymptotically stable, all trajectories in $\Omega_\beta$ converge towards it, and \eqref{eq:stab-spec} holds for the initial set $\Omega_\beta$. 
\end{proof}

\begin{proof}(Proof of Cor. \ref{cor:roa})
Define the $\beta$ sub-level of the ROA certificate $V$ as $\Omega_\beta \coloneqq \{x \in \cX : V(x) \leq \beta \}$. By construction, $\cI \subset \Omega_\beta$, and the condition of  \eqref{eq:lyap} hold. The proof then follows directly from Thm. \ref{thm:lyap}.  
\end{proof}

\begin{theorem}[Nagumo's Theorem] \label{thm:nagumo}
Let us state without proof Nagumo's Theorem. For a proof, see \cite{blanchini2008SetTheoreticMethodsControl}. 
Consider the system $\dot{\xi}(t) =3f(\xi(t))$, where $f$ is Lipschitz continuous such that for each initial condition $\xi(t_0) \in \cX$ it admits a unique solution. Let $\Omega \in \cX$ be a closed set. $\Omega$ is positively invariant if and only if for every exterior normal vector $v$ at point $x$ on the border of $\partial \Omega$, the inner product satisfies $\langle f(x), v \rangle \leq 0$.   
\end{theorem}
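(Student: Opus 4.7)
The statement is a biconditional, so I would split the argument into a straightforward necessity direction and a more delicate sufficiency direction. Necessity reduces to a first-order Taylor expansion at the boundary; sufficiency requires a comparison-type estimate on a Lyapunov-like distance function, together with Grönwall.

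\emph{Necessity.} Assume $\Omega$ is positively invariant, pick $x \in \partial \Omega$ with an exterior normal $v$, and suppose for contradiction that $\langle f(x), v \rangle > 0$. By Lipschitz continuity, the unique solution with $\xi(t_0) = x$ obeys $\xi(t_0 + h) = x + h f(x) + o(h)$, so
\begin{equation*}
\langle \xi(t_0 + h) - x,\, v\rangle \;=\; h\langle f(x), v\rangle + o(h) \;>\; 0
\end{equation*}
for all sufficiently small $h>0$. Since $v$ is a supporting (outward) normal to $\Omega$ at $x$, any point with positive inner product against $v$ lies locally in the complement of $\Omega$; hence $\xi(t_0+h) \notin \Omega$, contradicting positive invariance.

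\emph{Sufficiency.} Here I would work with the squared distance function $V(y) = \operatorname{dist}(y,\Omega)^2$, which is non-negative, locally Lipschitz on $\real^n$, and vanishes exactly on $\Omega$. The plan is to prove a comparison inequality of the form
\begin{equation*}
D^{+} V(\xi(t)) \;\leq\; 2 L_f \, V(\xi(t)),
\end{equation*}
along any trajectory, where $L_f$ is the Lipschitz constant of $f$ and $D^{+}$ denotes the upper Dini derivative. The key local computation: if $y \notin \Omega$ and $p$ is a nearest point in $\Omega$ to $y$, then the unit vector $v = (y-p)/\|y-p\|$ is an outward (proximal) normal to $\Omega$ at $p$. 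The hypothesis yields $\langle f(p), v\rangle \leq 0$, and Lipschitz continuity then gives
\begin{equation*}
\langle f(y), v\rangle \;\leq\; \langle f(p), v\rangle + L_f\|y-p\| \;\leq\; L_f \sqrt{V(y)}.
\end{equation*}
A short calculation using the standard formula for the derivative of the distance along a smooth curve converts this pointwise estimate into the displayed Dini bound. Grönwall's inequality, applied to $V \circ \xi$ with initial value $V(\xi(t_0)) = 0$, then forces $V(\xi(t)) \equiv 0$ for all $t \geq t_0$, i.e. $\xi(t) \in \Omega$ for all such $t$.

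\emph{Main obstacle.} The substantive difficulty lies not in the algebra above but in the non-smoothness of $V$ away from $\Omega$: the nearest-point projection $p$ need not be unique, and for a non-convex closed $\Omega$ the very notion of ``exterior normal'' has to be pinned down (typically via the proximal, Bouligand, or Clarke normal cone). Making the derivative-of-distance computation rigorous at points where the projection is multivalued, and verifying that it suffices to impose the tangential hypothesis only against proximal normals, is the step that absorbs the bulk of the technical effort. Once that is handled, the Grönwall closure is routine, and both implications combine to give the theorem.
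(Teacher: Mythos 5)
The paper itself does not prove Nagumo's theorem: it is explicitly ``stated without proof'' and deferred to the cited reference \cite{blanchini2008SetTheoreticMethodsControl}. So there is no in-paper argument to compare against. That said, your sketch follows the classical textbook route for this result -- a first-order contradiction for necessity, and a squared-distance Lyapunov function with a Grönwall closure for sufficiency -- and this is essentially the approach one finds in the cited reference and in standard viability-theory treatments.

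One place where the necessity direction is stated a bit too loosely: the claim that ``any point with positive inner product against $v$ lies locally in the complement of $\Omega$'' is literally correct only when $\Omega$ is convex (where the normal cone is separating). For a general closed set with $v$ a \emph{proximal} normal at $x$, one only has the one-sided quadratic estimate $\langle z - x, v\rangle \leq \sigma \|z - x\|^2$ for $z \in \Omega$ near $x$ and some $\sigma > 0$. The contradiction then follows because your expansion makes the left-hand side linear in $h$ while the admissible membership constraint is quadratic, so for small $h>0$ the trajectory must exit $\Omega$. This is a minor tightening rather than a gap, and it is fully consistent with the ``main obstacle'' you already flag concerning the choice of normal cone and non-uniqueness of nearest-point projections; your identification of where the technical effort actually sits is on target. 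The sufficiency direction (Lipschitz estimate transferring the tangency condition from the projection $p$ to the off-set point $y$, Dini derivative of $\operatorname{dist}(\cdot,\Omega)^2$, Grönwall) is correct in outline.
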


\begin{proof}(Proof of Thm. \ref{thm:barrier})
By definition we have that $\cI \cap \cU = \emptyset$. The set $\Omega_0 = \{x \in \cX : \dot{B}(x) \leq 0 \}$ defines a closed set for which \eqref{eq:barr-c} ensures that $f(x)$ points inwards along its border. It follows from \ref{thm:nagumo} that $\Omega_0$ is an forward invariant set which contains $\cI$, and that at all trajectories initialised in $\cI$ remain within $\Omega_0$ for all time $t > t_0$. \eqref{eq:barr-b} ensures that  $\Omega_0 \cap \cU = \emptyset$, and hence \eqref{eq:safety-spec} holds. 
\end{proof}

\begin{proof}(Proof of Cor. \ref{cor:stab-safe})
We note that the specification described by \eqref{eq:stab-safe-spec} is the conjunction of those in  \eqref{eq:stab-spec} and  \eqref{eq:safety-spec}. Therefore, the proof follows directly from Cor. \ref{cor:roa} and Thm. \ref{thm:barrier}. 
\end{proof}

\begin{proof}(Proof of Thm. \ref{thm:rwa} (from \cite{verdier2020FormalControllerSynthesis}))
Let $A = \{ x \in \cS : V(x) \leq 0\}$.
For $\xi(t_0)\in \cI$, it follows from \eqref{eq:rws-cert-a} and the definition of $A$ that $\xi(t_0) \in A$. From \eqref{eq:rws-cert-c}, for all $\xi(t_k) \in A \setminus \cG$, $\dot{V}(\xi(t_k)) < -\gamma$. Using $\forall x \in A, V(x) \leq 0$ and the comparison principle \cite{khalil2002NonlinearSystems}, it follows that $\forall k \in \mathbb{Z}_{\geq 0}, \forall t \in [t_k, t_k + h], \forall \xi(t_k) \in A \setminus \cG: V(\xi(t)) \leq V(\xi(t_k)) - \gamma h \leq -\gamma h$. Therefore, $\xi(t_k) \in A \setminus \cG$ implies $\forall t \in [t_k, t_k+h], V(\xi(t)$ will decrease and thus cannot reach $\partial \cS$, since by \eqref{eq:rws-cert-b} $\forall x \in \partial \cS: V(x) > 0$. Since $\cS$ is compact and $V(x)$ is continuous, $\exists e \in \real$ s.t. $e = \inf_{x \in \cS \setminus \cG} V(x)$ and the sublevel sets of $V(x)$ is compact. It follows that $V(x)$ is lower bounded on $A \setminus \cG \subset \cS \setminus \cG$, so $V(\xi(t))$ will decrease until in finite time $\xi(t)$ leaves $A \setminus \cG$ and may only enter $\cG$. 
\end{proof}

\begin{proof}(Proof of Thm. \ref{thm:rswa} (from \cite{verdier2020FormalControllerSynthesis})) 
   Let $B = \{ x \in \cS : V(x) \leq \beta\}$. From Thm. \ref{thm:rwa}, there exists a time $T \geq t_0$ such that $\xi(T) \in \cF$.
   Using a similar argument as before, we conclude that $\forall \xi(T) \in \cF, \xi(t)$ with $t\geq T$ enters in finite time $\cF \cap B$. Since $B$ is a sub-level set of the compact set $S$ and continuous $V$, then $B$ is also compact and so is $\cF \cap B$
   . From \eqref{eq:rswa-cert-e} we have $\forall x \in \partial(\cF \cap B): \dot{V}(x) \leq -\gamma$. Combining with \eqref{eq:rswa-cert-d}, we have that all states $\xi(t) \in \partial(\cF \cap B)$ cannot reach $\partial \cF$ and $V(\xi(t))$ decreases, meaning these trajectories remain in $\cF \cap B$. Therefore, $\cF \cap B$ is forward invariant. Since $\cF \subset \interior(\cS)$, we have that \eqref{eq:rswa-spec} holds. 
\end{proof}

\begin{proof}(Proof of \ref{thm:rar})
This proof follows directly from Thms. \ref{thm:barrier} and \ref{thm:rwa}.     
\end{proof}

\section{Broader Connections and Taxonomy of Properties}  
\label{sec:connections}

In this part, we further frame the presented  requirements in broader contexts, thus providing a categorisation of the properties under study within specific classes of \emph{formal specifications} \cite{bk08,mc18}. In order to do so, we draw connections with formal languages (in particular, regular expressions) and with automata theory (specifically, deterministic finite automata) \cite{hu79}, and in passing we also informally relate to the use of temporal logic for specifications of reactive models  \cite{Pnueli77,bk08,mc18}. 
However, these connections can be drawn under an important proviso:  that is, in formal methods, properties by and large are expressed over finite alphabets, namely over a finite set of labels (which can be related to our sets -- or complement thereof -- of interest), but most importantly concern traces in discrete time. Similarly, formal languages and automata deal with finite or countably-infinite strings of (finite) characters.

\paragraph*{Safety and reachability - language and semantic duality} 
We remark that all the presented properties consist of combinations of two fundamental specifications in formal verification \cite{10.1145/93385.93442,KV99}. The first is \emph{safety}, which qualitatively concerns  trajectories always staying clear from a nominative region in the state space that is deemed to be unsafe. Thus, safety specifications are infinite-horizon requirements, which however can be equivalently defined as requirements admitting finite-horizon counter-examples: namely, they are specifications that are necessarily invalidated by trajectories that enter the given unsafe in finite time. 

The second specification we refer to as \emph{reachability}, and comprise trajectories reaching a given desirable goal set (which is also known as \emph{reach} or \emph{target} set). 
In the area of formal languages, reachability is known to be a \emph{co-safety} property, namely a dual to a safety requirement and, as such, it is a finite-horizon property. Indeed, in formal verification co-safe specifications admit finite-horizon witnesses, that is satisfying trajectories that enter the goal set in finite time.  
In Section \ref{sec:properties} we have defined this as unconstrained reachability, and as a special instance of the reach-while avoid specification. We could have alternatively introduced it as the logical dual of safety, except that the certificate synthesis would not have followed as seamlessly. 
We should mention that, more generally, co-safety properties (and, in particular, reachability) are special instances of another broad class of specifications, known as \emph{liveness} properties \cite{bk08,mc18}: in this work we do not deal with general liveness properties, and in particular the synthesis of sufficient certificates for this class is left as future work.  

\paragraph*{Finite- and infinite-horizon requirements} 
We have discussed that safety and reachability properties are dual in the sense that the earlier raises a requirement over infinite-horizon trajectories, asking that \emph{nothing bad ever happens}, whereas the latter requires reasoning about finite-horizon solutions, asking that over a finite time horizon \emph{something good eventually happens} \cite{bk08}.   
However note that, whenever safety requirements are added to finite-time reachability properties, as in the case of the \emph{avoid} constraint in RWA, RSWA, and RAR, these safety requirements ought to hold only over the finite time spans inherited from the reachability requirements. 

Dually, as much as reachability requirements natively encompass co-safety properties, in this work we have discussed extension of such reachability requirements over \emph{infinite-time} horizons: this is quite natural in control theory, namely in the context of asymptotic stability.  
Similarly, we have not only raised requirements that trajectories reach a goal set within a finite time horizon, which we have simply denoted as \emph{reachability}, 
but additionally discussed (cf. \ref{sec:lim-convergence}) that they may approach the set (e.g., a set of equilibria, or a limit cycle) asymptotically, which we have denoted as \emph{asymptotic reachability}. 

\paragraph*{Automata theory} 
Next, we qualitatively relate these classes of properties to automata theory. 
Through their duality, both safety and co-safety properties can be expressed by the same class of finite-state models, namely deterministic finite automata (DFA) \cite{hu79}. In other words, traces within these two classes of specifications can be compiled by a DFA model, which reads them when they are started in one of its initial states, and when they terminate upon hitting one its accepting states. (Conversely,  liveness properties, which we saw generalise co-safety specifications, require automata of different nature, such as B\"uchi or Rabin automata, which specifically accept infinite traces.) 
Summarising this discussion, we can conclude that we can provide sufficient certificates for properties that can be expressed as DFAs: 
an enticing extension of our work is looking at richer specifications obtained by modularly composing such finite-state models. 

\paragraph*{Linear-time properties and temporal logic}
Finally, let us draw connections to temporal logic, a class of modal logic that was natively developed to specify requirements of (models of) reactive programs  \cite{Pnueli77}. Whilst originally employed for finite-state programs evolving over discrete-time steps, temporal logics, such as LTL (linear TL), have been also widely employed in the context of dynamical and control models \cite{tabuada09,belta2017formal}.  
Bearing in mind that the above caveat on discrete-time semantics holds also in this context (which for instance renders its \emph{next} operator useless for our purposes),  we can express safety requirements over safe set $S$ as \emph{always} $S$ - in LTL syntax this is expressed as the formula $\Box S$ (or $\mathsf G S$) - and, dually, reachability requirements over target set $T$ as \emph{eventually} $T$ - in LTL this is $\diamond T$ (or $\mathsf F T$); incidentally, the discussed duality between these requirements is crisply expressed logically as: $\Box S \equiv \neg \diamond \neg S \equiv \neg \diamond S^\complement$. Similar discussions follow through for finite-time properties, which in LTL require selecting a (finite) horizon $\tau \in \real$, yielding for example $\Box^{\leq \tau} S$. Reach-avoid can be expressed via the \emph{until} operator $\mathsf U$ in LTL, say $S\, \mathsf U T$, which may be tailored to unconstrained reachability as $\diamond T \equiv \mathtt{true} \, \mathsf U T$.  

Dovetailing to the discussion above, we can thus flexibly and modularly synthesise certificates also for the known \emph{weak until} $\mathsf W$, and even for the \emph{release} $\mathsf R$ specifications in LTL. Indeed, recall that \cite{bk08}   
$$
S \, \mathsf W T = (S \, \mathsf U T) \vee \Box S,   
$$
We can thus obtain a certificate for weak until from either a certificate for reach-avoid or one for safety, respectively. Additionally notice in particular, that $\Box S \equiv S \, \mathsf W \mathtt{false}$.

Finally, the useful \emph{release} operator in LTL expresses a safety requirement (set $S$) that is released upon reaching a target set ($T$), as follows:  
$$
T \, \mathsf R\, S  = \neg (\neg T \, \mathsf U \neg S),    
$$ 
and in fact (the next equivalent expression is easier for finding certificates)  %
$$
T\, \mathsf R\, S  = (\neg T \wedge S) \mathsf W (T \wedge S),    
$$ 
and so, in particular, $\Box S \equiv \mathtt{false} \, \mathsf R S$.

 \bibliographystyle{elsarticle-num} 
 \bibliography{main}

\pagebreak
 \onecolumn
\section{Benchmarks for Certificates Chapter}
\label{app:benchmarks_certificates}
Here, we detail the benchmarks presented in this work. These benchmarks are indexed with a number, corresponding to the row index from Table \ref{tab:main}, along with the property that is being verified for the benchmark.
We show the salient sets for the property; note that in some aligned it is simpler to show the safe set $\mathcal{X}_S$, and in other aligned the unsafe set $\mathcal{X}_U$. These are always the complement of each other: $\mathcal{X}_S = \mathcal{X} \setminus \mathcal{X}_U$.

We also use shorthand notation to describe sets of familiar geometric structure:
\begin{itemize}
    \item $\textrm{Rectangle}(lb, ub)$ denotes a hyper rectangle characterised by its lower bounds ($lb$) for each dimension, and its upper bounds ($ub$) in each dimension.
    \item $\textrm{Sphere}(c, r)$ denotes a hyper sphere with centre $c$ and radius $r$. The dimension of the hyper-sphere corresponds to the size of $c$.
    \item $\textrm{Torus}(c, r_i, r_o)$ denotes the set defined by $\textrm{Sphere}(c, r_o) \setminus \textrm{Sphere}(c, r_i)$; we note that this is not a true torus, but rather a sphere with a spherical hole in the middle.
\end{itemize}

\subsubsection*{ [1] NonPoly0 Stability }

$$f(x) = \left[ x_{0} x_{1} - x_{0}, \  - x_{1}\right]$$

$$\mathcal{X}: \Torus([0, 0], 1, 0.01) $$

Activation: [$\varphi_{2}$], Neurons: [6]

\subsubsection*{ [2] Poly1 Stability }

$$f(x) = \left[ - x_{0}^{3} - x_{0} x_{2}^{2}, \  - x_{0}^{2} x_{1} - x_{1}, \  3 x_{0}^{2} x_{2} - 4 x_{2}\right]$$

$$\mathcal{X}: \Torus([0.0, 0.0, 0.0], 10.0, 0.1) $$

Activation: [$\varphi_{2}$], Neurons: [8]

\subsubsection*{ [3] Benchmark1 Stability }

$$f(x) = \left[ u_{0} + x_{0} + x_{1}, \  u_{1} - x_{0} - x_{1}\right]$$

$$\mathcal{X}: \Torus([0.0, 0.0], 10.0, 0.1) $$

Activation: [$\varphi_{2}$], Neurons: [4]
CTRLActivation: [$\varphi_{1}$], Neurons: [15, 2]

\subsubsection*{ [4] InvertedPendulum Stability }

$$f(x) = \left[ u_{0} + x_{1}, \  u_{1} - \frac{8}{3} x_{1} + 19.62 \sin{\left(x_{0} \right)}\right]$$

$$\mathcal{X}: \Torus([0.0, 0.0], 1, 0.1) $$

Activation: [$\varphi_{2}$], Neurons: [5]
CTRLActivation: [$\varphi_{1}$], Neurons: [25, 2]

\subsubsection*{ [5] NonPoly1 ROA }

$$f(x) = \left[ 2 x_{0}^{2} x_{1} - x_{0}, \  - x_{1}\right]$$

$$\mathcal{X}_I: (\Sphere([-1, 1], 0.1) | \Sphere([1, -1], 0.2)) $$

Activation: [$\sigma_{\mathrm{t}^2}$], Neurons: [5]

\subsubsection*{ [6] LorenzSystem ROA }

$$f(x) = \left[ u_{0} - 10.0 x_{0} + 10.0 x_{1}, \  u_{1} - x_{0} x_{2} + 28.0 x_{0} - x_{1}, \  u_{2} + x_{0} x_{1} - \frac{8}{3} x_{2}\right]$$

$$\mathcal{X}_I: \Sphere([0, 0, 0], 0.3) $$

Activation: [$\varphi_{2}$], Neurons: [8]
CTRLActivation: [$\varphi_{1}$], Neurons: [8, 3]

\subsubsection*{ [7] Barr2 Safety }

$$f(x) = \left[ x_{1} - 1 + e^{- x_{0}}, \  - \sin^{2}{\left(x_{0} \right)}\right]$$

$$\mathcal{X}: x_{0} \geq -2 \wedge x_{1} \leq 2 $$

$$\mathcal{X}_I: \Sphere([-0.5, 0.5], 0.4) $$

$$\mathcal{X}_U: \Sphere([0.7, -0.7], 0.3) $$

Activation: [$\sigma_{\mathrm{t}}$], Neurons: [15]

\subsubsection*{ [8] ObstacleAvoidance Safety }

$$f(x) = \left[ \sin{\left(x_{2} \right)}, \  \cos{\left(x_{2} \right)}, \  \frac{3 x_{0} \sin{\left(x_{2} \right)} + 3 x_{1} \cos{\left(x_{2} \right)}}{x_{0}^{2} + x_{1}^{2} + 0.5} - \sin{\left(x_{2} \right)}\right]$$

$$\mathcal{X}: x_{0} \geq -2 \wedge x_{1} \geq -2 \wedge x_{2} \geq -1.57 \wedge x_{0} \leq 2 \wedge x_{1} \leq 2 \wedge x_{2} \leq 1.57 $$

$$\mathcal{X}_I: x_{0} \geq -0.1 \wedge x_{1} \geq -2 \wedge x_{2} \geq -0.52 \wedge x_{0} \leq 0.1 \wedge x_{1} \leq -1.8 \wedge x_{2} \leq 0.52 $$

$$\mathcal{X}_U: x_{0}^{2} + x_{1}^{2} \leq 0.04 $$

Activation: [$\varphi_{4}$], Neurons: [25]

\subsubsection*{ [9] HighOrd8 Safety }

\begin{align*}
    f(x) = [ & x_{1}, \  x_{2}, \  x_{3}, \  x_{4}, \  x_{5}, \  x_{6}, \  x_{7},                                  \\
             & - 576 x_{0} - 2400 x_{1} - 4180 x_{2} - 3980 x_{3} - 2273 x_{4} - 800 x_{5} - 170 x_{6} - 20 x_{7}]
\end{align*}

$$\mathcal{X}: \Rectangle([-2.2, -2.2, -2.2, -2.2, -2.2, -2.2, -2.2, -2.2], [2.2, 2.2, 2.2, 2.2, 2.2, 2.2, 2.2, 2.2]) $$

$$\mathcal{X}_I: \Rectangle([0.9, 0.9, 0.9, 0.9, 0.9, 0.9, 0.9, 0.9], [1.1, 1.1, 1.1, 1.1, 1.1, 1.1, 1.1, 1.1]) $$

\begin{align*}
    \mathcal{X}_U: \Rectangle( & [-2.2, -2.2, -2.2, -2.2, -2.2, -2.2, -2.2, -2.2], \\ &[-1.8, -1.8, -1.8, -1.8, -1.8, -1.8, -1.8, -1.8])
\end{align*}

Activation: [$\varphi_{1}$], Neurons: [10]

\subsubsection*{ [10] CtrlObstacleAvoidance Safety }

$$f(x) = \left[ \sin{\left(x_{2} \right)}, \  \cos{\left(x_{2} \right)}, \  u_{0} - \sin{\left(x_{2} \right)}\right]$$

$$\mathcal{X}: \Rectangle([-10.0, -10.0, -\pi], [10.0, 10.0, \pi]) $$

$$\mathcal{X}_I: \Rectangle([4.0, 4.0, \frac{-\pi}{2}], [6.0, 6.0, \frac{\pi}{2}]) $$

$$\mathcal{X}_U: \Rectangle([-9, -9, \frac{-\pi}{2}], [-7.0, -6.0, \frac{\pi}{2}]) $$

Activation: [$\sigma_{\mathrm{t}}$], Neurons: [15]
CTRLActivation: [$\sigma_{\mathrm{t}}$], Neurons: [5, 1]

\subsubsection*{ [11] NonPoly3 SWA }

$$f(x) = \left[ - 0.1 x_{0} x_{1}^{3} - 3 x_{0}, \  - x_{1} + x_{2}, \  - x_{2}\right]$$

$$\mathcal{X}: \Torus([0, 0, 0], 3, 0.01) $$

$$\mathcal{X}_I: \Sphere([-0.9, -0.9, -0.9], 1.0) $$

$$\mathcal{X}_U: (\Sphere([0.4, 0.4, 0.4], 0.2) | \Sphere([-0.4, 0.4, 0.4], 0.2)) $$

Activation: [$\varphi_{2}$], Neurons: [6]
AltActivation: [$\sigma_{\mathrm{t}}$], AltNeurons: [5]

\subsubsection*{ [12] Barr3 SWA }

$$f(x) = \left[ x_{1}, \  \frac{1}{3} x_{0}^{3} - x_{0} - x_{1}\right]$$

$$\mathcal{X}: \Rectangle([-3, -2], [2.5, 1]) $$

$$\mathcal{X}_I: \Rectangle([0.4, 0.1], [0.8, 0.5]) $$

$$\mathcal{X}_U: \Sphere([-1, -1], 0.4) $$

Activation: [$\varphi_{2}$], Neurons: [5]
AltActivation: [$\sigma_{\mathrm{sig}}$,$\varphi_{2}$], AltNeurons: [5, 5]

\subsubsection*{ [13] SecondOrder SWA }

$$f(x) = \left[ - x_{0}^{3} + x_{1}, \  u_{0}\right]$$

$$\mathcal{X}: \Rectangle([-1.5, -1.5], [1.5, 1.5]) $$

$$\mathcal{X}_I: \Rectangle([-0.5, -0.5], [0.5, 0.5]) $$

$$\mathcal{X}_U: Complement(\Rectangle([-1, -1], [1, 1])) $$

Activation: [$\varphi_{2}$], Neurons: [8]
AltActivation: [$\varphi_{2}$], AltNeurons: [5]
CTRLActivation: [$\varphi_{1}$], Neurons: [8, 1]

\subsubsection*{ [14] ThirdOrder SWA }

$$f(x) = \left[ u_{0} - 10 x_{0} + 10 x_{1}, \  - x_{0} x_{2} + 28 x_{0} - x_{1}, \  x_{0} x_{1} - \frac{8}{3} x_{2}\right]$$

$$\mathcal{X}: \Rectangle([-6, -6, -6], [6, 6, 6]) $$

$$\mathcal{X}_I: \Rectangle([-1.2, -1.2, -1.2], [1.2, 1.2, 1.2]) $$

$$\mathcal{X}_U: Complement(\Rectangle([-5, -5, -5], [5, 5, 5])) $$

Activation: [$\varphi_{2}$], Neurons: [10]
AltActivation: [$\sigma_{\mathrm{t}}$], AltNeurons: [8]
CTRLActivation: [$\varphi_{1}$], Neurons: [8, 1]

\subsubsection*{ [15] SecondOrderLQR RWA }

$$f(x) = \left[ - x_{0}^{3} + x_{1}, \  - 1.0 x_{0} - 1.73 x_{1}\right]$$

$$\mathcal{X}: \Rectangle([-1.5, -1.5], [1.5, 1.5]) $$

$$\mathcal{X}_I: \Rectangle([-0.5, -0.5], [0.5, 0.5]) $$

$$\mathcal{X}_S: \Rectangle([-1, -1], [1, 1]) $$

$$\mathcal{X}_G: \Rectangle([-0.05, -0.05], [0.05, 0.05]) $$

Activation: [$\varphi_{2}$], Neurons: [4]

\subsubsection*{ [16] ThirdOrderLQR RWA }

$$f(x) = \left[ - 33.71 x_{0} - 8.49 x_{1}, \  - x_{0} x_{2} + 28 x_{0} - x_{1}, \  x_{0} x_{1} - \frac{8}{3} x_{2}\right]$$

$$\mathcal{X}: \Rectangle([-6, -6, -6], [6, 6, 6]) $$

$$\mathcal{X}_I: \Rectangle([-1.2, -1.2, -1.2], [1.2, 1.2, 1.2]) $$

$$\mathcal{X}_S: \Rectangle([-5, -5, -5], [5, 5, 5]) $$

$$\mathcal{X}_G: \Rectangle([-0.3, -0.3, -0.3], [0.3, 0.3, 0.3]) $$

Activation: [$\varphi_{2}$], Neurons: [16]

\subsubsection*{ [17] SecondOrder RWA }

$$f(x) = \left[ - x_{0}^{3} + x_{1}, \  u_{0}\right]$$

$$\mathcal{X}: \Rectangle([-1.5, -1.5], [1.5, 1.5]) $$

$$\mathcal{X}_I: \Rectangle([-0.5, -0.5], [-0.1, -0.1]) $$

$$\mathcal{X}_S: (\Rectangle([-1.5, -1.5], [1.5, 1.5]) \setminus \Sphere([0.5, 0.5], 0.2)) $$

$$\mathcal{X}_G: \Rectangle([-0.05, -0.05], [0.05, 0.05]) $$

Activation: [$\sigma_{\mathrm{sig}}$,$\varphi_{2}$], Neurons: [4, 4]
CTRLActivation: [$\varphi_{1}$], Neurons: [8, 1]

\subsubsection*{ [18] ThirdOrder RWA }

$$f(x) = \left[ u_{0} - 10 x_{0} + 10 x_{1}, \  - x_{0} x_{2} + 28 x_{0} - x_{1}, \  x_{0} x_{1} - \frac{8}{3} x_{2}\right]$$

$$\mathcal{X}: \Rectangle([-6, -6, -6], [6, 6, 6]) $$

$$\mathcal{X}_I: \Rectangle([-1.2, -1.2, -1.2], [1.2, 1.2, 1.2]) $$

$$\mathcal{X}_S: \Rectangle([-5, -5, -5], [5, 5, 5]) $$

$$\mathcal{X}_G: \Rectangle([-0.3, -0.3, -0.3], [0.3, 0.3, 0.3]) $$

Activation: [$\varphi_{2}$], Neurons: [5]
CTRLActivation: [$\varphi_{1}$], Neurons: [8, 1]

\subsubsection*{ [19] InvertedPendulum RWA }

$$f(x) = \left[ u_{0} + x_{1}, \  u_{1} - \frac{8}{3} x_{1} + 19.62 \sin{\left(x_{0} \right)}\right]$$

$$\mathcal{X}: \Rectangle([-3, -3], [3, 3]) $$

$$\mathcal{X}_I: \Rectangle([-0.6, -0.6], [0.6, 0.6]) $$

$$\mathcal{X}_S: \Rectangle([-2.5, -2.5], [2.5, 2.5]) $$

$$\mathcal{X}_G: \Rectangle([-0.01, -0.01], [0.01, 0.01]) $$

Activation: [$\sigma_{\mathrm{sig}}$], Neurons: [5]
CTRLActivation: [$\varphi_{1}$], Neurons: [8, 2]

\subsubsection*{ [20] SecondOrderLQR RSWA }

$$f(x) = \left[ - x_{0}^{3} + x_{1}, \  - 1.0 x_{0} - 1.73 x_{1}\right]$$

$$\mathcal{X}: \Rectangle([-1.5, -1.5], [1.5, 1.5]) $$

$$\mathcal{X}_I: \Rectangle([-0.5, -0.5], [0.5, 0.5]) $$

$$\mathcal{X}_S: \Rectangle([-1, -1], [1, 1]) $$

$$\mathcal{X}_F: \Rectangle([-0.05, -0.05], [0.05, 0.05]) $$

Activation: [$\varphi_{2}$], Neurons: [4]

\subsubsection*{ [21] ThirdOrderLQR RSWA }

$$f(x) = \left[ - 33.71 x_{0} - 8.49 x_{1}, \  - x_{0} x_{2} + 28 x_{0} - x_{1}, \  x_{0} x_{1} - \frac{8}{3} x_{2}\right]$$

$$\mathcal{X}: \Rectangle([-6, -6, -6], [6, 6, 6]) $$

$$\mathcal{X}_I: \Rectangle([-1.2, -1.2, -1.2], [1.2, 1.2, 1.2]) $$

$$\mathcal{X}_S: \Rectangle([-5, -5, -5], [5, 5, 5]) $$

$$\mathcal{X}_F: \Rectangle([-0.3, -0.3, -0.3], [0.3, 0.3, 0.3]) $$

Activation: [$\varphi_{2}$], Neurons: [16]

\subsubsection*{ [22] InvertedPendulumLQR RSWA }

$$f(x) = \left[ - 7.21 x_{0} - 0.34 x_{1}, \  - 1.34 x_{0} - 2.997 x_{1} + 19.62 \sin{\left(x_{0} \right)}\right]$$

$$\mathcal{X}: \Rectangle([-3, -3], [3, 3]) $$

$$\mathcal{X}_I: \Rectangle([-0.6, -0.6], [0.6, 0.6]) $$

$$\mathcal{X}_S: \Rectangle([-2.5, -2.5], [2.5, 2.5]) $$

$$\mathcal{X}_F: \Rectangle([-0.3, -0.3], [0.3, 0.3]) $$

Activation: [$\sigma_{\mathrm{sig}}$,$\varphi_{2}$], Neurons: [5, 5]

\subsubsection*{ [23] SecondOrder RSWA }

$$f(x) = \left[ - x_{0}^{3} + x_{1}, \  u_{0}\right]$$

$$\mathcal{X}: \Rectangle([-1.5, -1.5], [1.5, 1.5]) $$

$$\mathcal{X}_I: \Rectangle([-0.5, -0.5], [0.5, 0.5]) $$

$$\mathcal{X}_S: \Rectangle([-1, -1], [1, 1]) $$

$$\mathcal{X}_F: \Rectangle([-0.05, -0.05], [0.05, 0.05]) $$

Activation: [$\varphi_{2}$], Neurons: [8]
CTRLActivation: [$\varphi_{1}$], Neurons: [8, 1]

\subsubsection*{ [24] InvertedPendulum RSWA }

$$f(x) = \left[ u_{0} + x_{1}, \  u_{1} - \frac{8}{3} x_{1} + 19.62 \sin{\left(x_{0} \right)}\right]$$

$$\mathcal{X}: \Rectangle([-3, -3], [3, 3]) $$

$$\mathcal{X}_I: \Rectangle([-0.6, -0.6], [0.6, 0.6]) $$

$$\mathcal{X}_S: \Rectangle([-2.5, -2.5], [2.5, 2.5]) $$

$$\mathcal{X}_F: \Rectangle([-0.3, -0.3], [0.3, 0.3]) $$

Activation: [$\sigma_{\mathrm{sig}}$,$\varphi_{2}$], Neurons: [5, 5]
CTRLActivation: [$\varphi_{1}$], Neurons: [8, 2]

\subsubsection*{ [25] SecondOrderLQR RAR }

$$f(x) = \left[ - x_{0}^{3} + x_{1}, \  - 1.0 x_{0} - 1.73 x_{1}\right]$$

$$\mathcal{X}: \Rectangle([-3.5, -3.5], [3.5, 3.5]) $$

$$\mathcal{X}_I: \Rectangle([-2, -2], [2, 2]) $$

$$\mathcal{X}_S: \Rectangle([-3, -3], [3, 3]) $$

$$\mathcal{X}_G: \Rectangle([-0.1, -0.1], [0.1, 0.1]) $$

$$\mathcal{X}_F: \Rectangle([-0.15, -0.15], [0.15, 0.15]) $$

Activation: [$\sigma_{\mathrm{soft}}$], Neurons: [6]
AltActivation: [$\varphi_{2}$], AltNeurons: [6]

\subsubsection*{ [26] InvertedPendulum RAR }

$$f(x) = \left[ u_{0} + x_{1}, \  u_{1} - \frac{8}{3} x_{1} + 19.62 \sin{\left(x_{0} \right)}\right]$$

$$\mathcal{X}: \Rectangle([-3.5, -3.5], [3.5, 3.5]) $$

$$\mathcal{X}_I: \Rectangle([-2, -2], [2, 2]) $$

$$\mathcal{X}_S: \Rectangle([-3, -3], [3, 3]) $$

$$\mathcal{X}_G: \Rectangle([-0.1, -0.1], [0.1, 0.1]) $$

$$\mathcal{X}_F: \Rectangle([-0.2, -0.2], [0.2, 0.2]) $$

Activation: [$\sigma_{\mathrm{sig}}$,$\varphi_{2}$], Neurons: [6, 6]
AltActivation: [$\sigma_{\mathrm{sig}}$,$\varphi_{2}$], AltNeurons: [6, 6]
CTRLActivation: [$\varphi_{1}$], Neurons: [8, 2]

\twocolumn

\end{document}